\documentclass[11pt]{article} %

\usepackage[]{geometry}  %
\usepackage{amsmath}
\usepackage{amssymb}
\usepackage{amsfonts}
\usepackage{amsthm}
\usepackage{graphicx}
\usepackage{color}
\usepackage{microtype}

\newcommand{\R}{\mathbb{R}}
\newcommand{\Z}{\mathbb{Z}}
\newcommand{\range}[1]{\{1,\ldots,#1\}}
\newcommand{\Exp}{\mathrm{E}}

\usepackage{thmtools, thm-restate}
\declaretheorem{theorem}

\newtheorem{lemma}[theorem]{Lemma}
\newtheorem{definition}[theorem]{Definition}
\usepackage[colorlinks=false]{hyperref}

\pagestyle{plain}

\begin{document}

\title{Fast algorithmic self-assembly of simple shapes\\ using random agitation}

\newlength{\authorgap}
\setlength{\authorgap}{0.07\textwidth}
\author{%
 Ho-Lin Chen%
   \thanks{National Taiwan University.
   \protect\url{holinchen@ntu.edu.tw}.
   Supported by NSC grant 101-2221-E-002-122-MY3.}%
   \hspace{\authorgap}
   David Doty%
    \thanks{California Institute of Technology.
    \protect\url{ddoty@caltech.edu}.
    Supported by National Science Foundation grants 0832824 \& 1317694 (The Molecular Programming Project), CCF-1219274,  CCF-1162589.}
    \hspace{\authorgap}
Dhiraj Holden%
   \thanks{California Institute of Technology.
   \protect\url{dholden@caltech.edu}.
   Supported by NSF grant CCF-1219274.}
    \\
Chris Thachuk%
    \thanks{California Institute of Technology.
    \protect\url{thachuk@caltech.edu}.
     Supported by NSF grant CCF/HCC-1213127 \& a Banting Fellowship.
    }
    \hspace{\authorgap}
Damien Woods%
    \thanks{California Institute of Technology.
    \protect\url{woods@caltech.edu}.
    Supported by National Science Foundation grants 0832824 \& 1317694 (The Molecular Programming Project), CCF-1219274,  CCF-1162589.}
    \hspace{\authorgap}
   Chun-Tao Yang%
    \thanks{National Taiwan University.
    \protect\url{havachoice@gmail.com}.
    Supported by NSC grant 101-2221-E-002-122-MY3.}
    }
\date{}

\maketitle

\begin{abstract}
We study the power of uncontrolled random molecular movement in the nubot model of self-assembly.  The nubot model is an asynchronous nondeterministic cellular automaton augmented with rigid-body movement rules (push/pull, deterministically and programmatically applied to specific monomers) and random agitations (nondeterministically applied to every monomer and direction with equal probability all of the time).  Previous work on the nubot model showed how to build simple shapes such as lines and squares quickly---in expected time that is merely logarithmic of their size.  These results crucially make use of the programmable rigid-body movement rule: the ability for a single monomer to control the movement of a large objects quickly, and only at a time and place of the programmers' choosing.  However, in engineered molecular systems, molecular motion is largely uncontrolled and fundamentally random. This raises the question of whether similar results can be achieved in a more restrictive, and perhaps easier to justify, model where uncontrolled random movements, or agitations, are happening throughout the self-assembly process and are the only form of rigid-body movement.  We show that this is indeed the case: we give a polylogarithmic expected time construction for squares using agitation, and a sublinear expected time construction to build a line.  Such results are impossible in an agitation-free (and movement-free) setting and thus show the benefits of exploiting uncontrolled random movement.
\end{abstract}

\section{Introduction}
Every molecular structure that has been self-assembled in nature or in the lab was assembled in conditions (above absolute zero) where molecules are vibrating relative to each other, randomly bumping into each other via Brownian motion, and often experiencing rapid uncontrolled fluid flows. It makes sense then to study a model of self-assembly that includes, and indeed allows us to exploit and program, such phenomena. It is a primary goal of this paper to show the power of self-assembly under such conditions.

In the theory of molecular-scale self-assembly, millions of simple interacting components are designed to autonomously stick together to build complicated shapes and patterns. Many models of self-assembly are cellular automata-like crystal growth models, such as the abstract tile assembly model~\cite{Winf98thesis}. Indeed this and other such models have given rise to a rich theory of self-assembly~\cite{doty2012,patitz2012introduction,WoodsIU}. In biological systems we frequently see much more sophisticated growth processes, where self-assembly is combined with active molecular motors that have the ability to push and pull large structures around. For example, during the gastrulation phase of the embryonic development of the model organism \emph{Drosophila melanogaster} (a fly) large-scale (100s of micrometers) rearrangements of the embryo are effected by thousands of (nanoscale) molecular motors working together to rapidly push and pull the embryo into a final desired shape~\cite{dawes2005folded,Wieschaus08pulsed}.
We wish to understand, at a high level of abstraction, the ultimate computational capabilities and  limitations of such molecular scale rearrangement and growth. 

The nubot model of self-assembly, put forward in~\cite{nubots}, is  an asynchronous nondeterministic cellular automaton augmented with non-local  rigid-body movement. Unit-sized monomers are placed on a 2D hexagonal grid. Monomers can undergo  state changes, appear, and disappear, using local cellular-automata style rules. However, there is also a non-local aspect to the model, a kind of rigid body movement that comes in two forms: movement rules and random agitations.  
 A {\em movement rule} $r$, consisting of a pair of monomer states $A,B$ and two unit vectors, is a programatic way to specific unit-distance translation of a set of monomers in one step.   If $A$ and $B$ are in a prescribed orientation, one is nondeterministically chosen to move unit distance in a prescribed direction.  The rule $r$ is applied in a rigid-body fashion: if $A$ is to move right, it pushes anything immediately to its right and pulls any monomers that are bound to its left (roughly speaking) which in turn push and pull other monomers, all in one step. The rule may not be applicable if it is blocked (i.e.\ if movement of $A$ would force  $B$ to also move), which is analogous to the fact that an arm can not push its own shoulder.  The other form of movement in the model is called {\em agitation}: at every point in time, every  monomer on the grid may move unit distance in any of the six directions, at unit rate for each (monomer, direction) pair. An agitating monomer will push or pull any monomers that it is adjacent to, in a way that preserves rigid-body structure, all in one step. Unlike movement, agitations are never blocked.  Rules are applied asynchronously and in parallel in the model. Taking its time model from stochastic chemical kinetics, a nubot system evolves as a continuous time Markov process. 

In summary, there are two kinds of one-step parallel movement in the model:  (a) a {\em movement rule} is applied only to a pair of monomers with the prescribed states and orientation, and then causes the movement of one of these monomers along with  other pushed/pulled monomers, whereas (b) {\em agitations} are always applicable at every time instant, in every direction and to every monomer throughout the grid and an agitating monomer may push/pull other monomers.

In previous work, the movement rule was exploited to show that nubots are very efficient in terms of their computational ability to quickly build complicated shapes and patterns. Agitation was treated as something to be robust against (i.e.\ the constructions in~\cite{nubots,nubotsDNA19} work both with and without agitation), which seems like a natural requirement when building structures in a molecular-scale environment. However, it was left open as to whether the kind of results achieved with movement could be achieved without movement, but by exploiting agitation~\cite{nubotsDNA19}. 
In other words, it was left open as to whether augmenting a cellular automaton with an uncontrolled form of random rigid-body movement would facilitate functionality  that is impossible without it.  Here we show this is the case.

Agitation, and the movement rule, are defined in such a way that larger objects move faster, and this is justified by imagining that we are self-assembling rigid-body objects in a nanoscale environment where there is not only diffusion and Brownian motion but also convection, turbulent flow, cytoplasmic streaming and other uncontrolled inputs of energy interacting with each monomer in all directions. It remains as an interesting open research direction to look at the nubot model but with a slower rate model for agitation and movement, specifically where we hold on to the notion of rigid body movement and/or agitation but where bigger things move slower, as seen in Brownian motion for example.  Independent of the choice of rate model,  one of our main motivations here is to understand what can be done with {\em asynchronous, distributed and parallel} self-assembly with {\em rigid body motion}: the fact that our systems work in a parallel fashion is actually more important to us than the fact they are fast. It is precisely this engineering of distributed asynchronous molecular systems that interests us.  

The nubot model is related to, but distinct from, a number of other self-assembly and robotics models as described in~\cite{nubots}.  Besides the fact that biological systems make extensive use of molecular-scale movements and rearrangements, in recent years we have seen the design and fabrication of a number of molecular-scale DNA motors~\cite{turberfield07bat} and active self-assembly systems which also serve to motivate our work, details of which can be found in previous papers on nubots~\cite{nubots,nubotsDNA19}.

\subsection{Results and future work} 
 Let the {\em agitation nubot} model denote the nubot model without the movement rule and with agitation (see Section~\ref{sec:defs} for formal definitions). 
The first of our two main results shows that agitation can be exploited to build a large object exponentially quickly:
\begin{restatable}{theorem}{thmsquare}
\label{thm:square}%
There is a set of nubot rules $\mathcal{N}_{\mathrm{square}}$, such that for all $n  \in\mathbb{N}$, starting from a line of $\lfloor \log_2 n \rfloor +1$ monomers, each in state~$0$ or~$1$, $\mathcal{N}_{\mathrm{square}}$ in the agitation nubot model assembles an $n\times n$ square  in $O(\log^2 n)$ expected time, $n \times n$ space and $O(1)$ monomer states.\end{restatable}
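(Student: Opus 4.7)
The plan has two parts: (1) use the $O(\log n)$-bit seed to construct a length-$n$ line in $O(\log^2 n)$ expected time, and (2) grow the $n \times n$ square from that line in another $O(\log^2 n)$ expected time. Summing the two phases gives the claimed bound.

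For part (1), I would use a standard binary doubling-counter: the bits of $n$ are processed from most to least significant and, at each of the $O(\log n)$ stages, the current running-length line is doubled (and optionally incremented by one). Without the movement rule, each doubling must be effected by agitation. The idea is to grow a passive copy of the current length-$\ell$ line adjacent to the original using local cellular-automaton rules (which takes $O(\log \ell)$ expected time, since $\ell$ independent unit-rate attachment events finish in time $\Theta(\log \ell)$), and then exploit agitation to flip or slide the copy into an end-to-end configuration. By designing a rigid ``hinge'' subassembly whose only free degree of motion is the intended flip, each flip succeeds within $O(\log n)$ expected time, giving $O(\log^2 n)$ overall for part (1).

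For part (2), I would use a sequence of rectangle doublings $n \times 1 \to n \times 2 \to n \times 4 \to \cdots \to n \times n$, for $\log n$ stages. Each stage first grows a passive copy of the current $n \times k$ rectangle attached to the original by a hinge, then applies the agitation-driven flipping primitive from part~(1) to bring it into the $n \times 2k$ configuration. Crucially, agitation rates in the nubot model are independent of the size of the moving subassembly, so each hinge flip succeeds with $\Omega(1)$ rate regardless of $k$. Coupled with $O(\log n)$ coordination delay along a length-$n$ edge, each stage runs in $O(\log n)$ expected time, yielding $O(\log^2 n)$ for part~(2).

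The principal technical obstacle is constructing and analyzing the agitation-driven flipping primitive. Since agitation is always on and applies to every monomer in every direction simultaneously, most agitations must be either rigidly blocked by the surrounding structure or trivially reversible; only a narrow subset of configurations should admit the intended flip. I would design each doubling gadget so that the only accessible degree of freedom at the appropriate moment is the desired flip, and then analyze the induced continuous-time Markov chain via a potential-function or coupling argument, bounding the expected hitting time of the doubled configuration by $O(\log n)$. Keeping the monomer alphabet to $O(1)$ states, ensuring robustness against unintended agitations, and staying within the $n \times n$ bounding box simultaneously is the delicate part of the full proof.
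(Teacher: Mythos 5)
There is a genuine gap, and it sits exactly where you flag the ``delicate part'': the agitation-driven flipping primitive does not exist in this model, and the translation-based substitute is far too slow. Agitation only ever \emph{translates} an agitation set by a single unit vector; there are no rotations. So a rigid length-$\ell$ copy attached by a flexible ``hinge'' cannot be flipped end-to-end --- it can only be slid, one unit per agitation, via an (at best ratcheted) random walk. Displacing a rigid sub-assembly by distance $d$ relative to the rest of the structure requires $\Theta(d^2)$ relative agitation steps absent ratcheting, which at rate $\Theta(\ell)$ costs expected time $\Theta(d^2/\ell)$, i.e.\ $\Theta(\ell)$ for $d=\Theta(\ell)$ --- not $O(\log \ell)$. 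This is precisely why the paper's own line construction (Theorem~\ref{thm:1D_line}), which is built around sliding a length-$m$ rigid ``sword'' a distance $m$ with ratcheting, only achieves $O(m^{1/3}\log m)$ per doubling rather than $O(\log m)$. Your claim that each hinge flip ``succeeds with $\Omega(1)$ rate regardless of $k$'' conflates the rate of a single unit-distance agitation (which is indeed size-independent) with the time to accumulate a displacement proportional to the size of the piece being moved. As stated, both your Part~(1) and Part~(2) bounds are unsupported, and Part~(1) would, if correct, strictly improve the paper's Theorem~\ref{thm:1D_line}.

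The paper's construction avoids large displacements entirely, and this is the idea you are missing. It maintains the square as a ``comb'' of $m$ rigid teeth of height $m$ joined by flexible bonds, so that agitation need only open a \emph{unit-width} gap between adjacent teeth. Because each tooth has $m$ monomers, such a gap opens at rate $\geq m$ and is filled by a new monomer at rate $m$; a three-state (\texttt{open}/\texttt{closed}/\texttt{filled}) Markov-chain analysis (Lemma~\ref{lem:doubling_analysis}) gives $O(1)$ expected time per insertion with exponential tails, hence $O(\log m)$ for all $\Theta(m)$ parallel insertions in one ``half-doubling.'' Four half-doublings plus $O(\log m)$-time synchronizations (Lemma~\ref{lem:sync_agitate}) and bond reconfigurations give $O(\log n)$ per doubling and $O(\log^2 n)$ overall. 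To repair your argument you would need to replace the flip/slide primitive with some mechanism in which no monomer ever needs to move more than $O(1)$ distance relative to its neighbors, with new mass inserted in parallel at $\Theta(m)$ sites per stage.
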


The proof is in Section~\ref{sec:square}.
Our second main result shows that we can achieve sublinear  expected time growth of a length $n$ line in only $O(n)$ space:
\begin{restatable}{theorem}{thmline}
\label{thm:1D_line}%
There is a set of nubot rules $\mathcal{N}_{\mathrm{line}}$, such that for any $\epsilon > 0$, for sufficiently large $n \in\mathbb{N}$, starting from a line of $\lfloor \log_2 n \rfloor +1$ monomers, each in state~$0$ or~$1$, $\mathcal{N}_{\mathrm{line}}$ in the agitation nubot model assembles an $n\times 1$ line in $O(n^{1/3} \log n )$ expected time,  $n\times 5$ space and~$O(1)$ monomer states. 
\end{restatable}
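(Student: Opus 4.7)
My plan is a two-phase construction that divides the length-$n$ target into $\Theta(n^{2/3})$ parallel segments each of length $\Theta(n^{1/3})$, spending $O(\log^2 n)$ time setting up seeds for every segment using Theorem~\ref{thm:square} and then $O(n^{1/3}\log n)$ time growing all segments in parallel by means of an agitation-driven ratchet.

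\emph{Phase 1 (setup, $O(\log^2 n)$ time).}
I would apply (a minor variant of) Theorem~\ref{thm:square} to assemble an auxiliary $n^{1/3}\times n^{1/3}$ square from the $\lfloor \log_2 n\rfloor +1$ monomer seed in $O(\log^2 n)$ expected time. Each of the $n^{2/3}$ cells of this square acquires a unique $O(\log n)$-bit position label as a byproduct of the square's counter-based construction: the seed already encodes $n$ in binary, and row and column indices can be stamped onto each cell as the square doubles outward. Using only short-range rules together with agitation applied to the whole assembly, I would then reshape the square into a dense width-$5$, length-$\Theta(n^{2/3})$ strip in which the $n^{2/3}$ seeds sit contiguously along the eventual line; this remains within the $O(\log^2 n)$ budget.

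\emph{Phase 2 (parallel growth, $O(n^{1/3}\log n)$ time).}
Each of the $n^{2/3}$ seeds initiates a local process that extends one row of the strip by $n^{1/3}$ monomers to its right. Because the agitation nubot model provides no programmed pushing, an extension step must wait for a random agitation to open the one-unit gap immediately downstream of the tip, after which a local rule inserts a new monomer and ratchets the growth forward so that the gap cannot close again. The downstream sub-assembly always contains at least one monomer and agitates at unit rate in each of the six directions, so the expected waiting time for the required rightward gap is $O(1)$ per tile; at each insertion the seed's $\Theta(\log n)$-bit length counter must decrement to track the remaining extension, incurring $O(\log n)$ amortised time per tile via carry propagation. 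With $n^{2/3}$ tips running in parallel the sequential depth per tip is $n^{1/3}$ tile insertions, for a total of $O(n^{1/3}\log n)$ expected time.

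\emph{Main obstacle.}
The principal technical difficulty is the joint analysis of the $n^{2/3}$ simultaneously-growing tips: a naive per-tip bound followed by a union bound loses factors and also ignores the fact that one tip's successful agitation may be another tip's obstruction. I would address this via a monotone-coupling argument showing that each tip's sequence of waiting times stochastically dominates a sum of independent exponentials with rate $\Theta(1)$, and then apply a tail inequality on the maximum of $n^{2/3}$ such sums of length $n^{1/3}$ to absorb the parallel-maximum into the stated $O(n^{1/3}\log n)$ bound. A secondary difficulty is the reshaping step of Phase~1, which I would handle by reusing the agitation-based compaction and counter gadgets developed in the proof of Theorem~\ref{thm:square}, confirming that they fit within width~$5$.
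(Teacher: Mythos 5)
Your plan is genuinely different from the paper's (which iterates a length-doubling ``sword-and-scabbard'' gadget $\log n$ times, each doubling being a ratcheted random walk analysed via a tail bound on continuous-time walks), but as written it has a decisive gap in Phase~1. The crux is the ``reshape the square into a dense width-$5$, length-$\Theta(n^{2/3})$ strip'' step, which you assert stays within the $O(\log^2 n)$ budget. The $n^{1/3}\times n^{1/3}$ square obtainable from Theorem~\ref{thm:square} has the right \emph{number} of monomers but the wrong \emph{geometry}: its diameter is $n^{1/3}$, and you need a structure of diameter $\Theta(n^{2/3})$. Unfolding a 2D block into a thin strip under agitation is not a gadget that appears anywhere in the proof of Theorem~\ref{thm:square} (that construction only grows combs outward; it never converts area into length), and it is in fact an instance of the very problem Theorem~\ref{thm:1D_line} is about: every unit of added length requires the downstream sub-assembly to win a race between an agitation that opens a gap and the agitations that close it, so even with perfect ratcheting and $O(1)$ expected time per unit of extension, a sequential unfolding to length $n^{2/3}$ costs $\Omega(n^{2/3})$ expected time, already exceeding your target bound. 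If you could reshape a square into a line in polylogarithmic time, the whole theorem would follow trivially (build a $\sqrt{n}\times\sqrt{n}$ square and unfold it), which is a strong sign that all of the difficulty has been hidden in this step. A related problem: stamping a distinct $\Theta(\log n)$-bit label onto each \emph{cell} is impossible with $O(1)$ states per monomer, and even a shared length counter per seed requires replicating a $\Theta(\log n)$-monomer register to $n^{2/3}$ sites, another nontrivial sub-construction you do not supply (though this one is avoidable by letting each segment grow until it collides with the next block).

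Phase~2 also needs more care than you give it. With $n^{2/3}$ growth fronts the strip is a chain of rigid blocks joined by $n^{2/3}$ flexible joints in series, and the open/close dynamics of the gaps are strongly coupled: an agitation that would open gap $j$ in an isolated two-block system may instead be absorbed opening or closing a neighbouring gap, so the ratio of close rate to open rate at a fixed gap is not obviously $O(1)$, and the insertion rate into any one open gap is only $O(1)$ (at most a constant number of monomers border the gap column) while the close rate scales with the sizes of the adjacent components. This is exactly the race that forces the paper's square construction to use teeth of height $m$ (so the fill rate matches the close rate, cf.\ Lemma~\ref{lem:doubling_analysis}), a resource unavailable in width $5$; your proposed ``monotone coupling'' is named but not constructed, and it is the heart of the matter. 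By contrast, the paper sidesteps the many-coupled-gaps issue entirely by keeping a single sliding interface per stage and paying $O(m^{1/3}\ln m)$ per doubling (Lemma~\ref{lem:1D_line_doubling_time}).
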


The proof is in Section~\ref{sec:line}. 
Lines and squares are examples of fundamental components for the self-assembly of arbitrary computable shapes and patterns in the nubot model~\cite{nubots,nubotsDNA19,dabbyChenSODA2012} and other self-assembly models~\cite{doty2012,patitz2012introduction}. 

Our work here suggests that random agitations applied in an uncontrolled fashion throughout the grid are a powerful resource. However, are random agitations as powerful as the programable and more deterministic movement rule used in previous work on the nubot model~\cite{nubots,nubotsDNA19}?  In other words can agitation {\em simulate} movement?  More formally, is it the case that for each nubot program~$\mathcal{N}$, there is an agitation nubot program $\mathcal{A_N}$, that acts just like $\mathcal{N}$ but with some $m \times m$ scale-up in space, and a $k$ factor slowdown in time, where $m$ and $k$ are (constants)  independent of $\mathcal{N}$ and its input? This question is inspired by the use of simulations in tile assembly as a method to classify and separate the power of self-assembly systems, for more details see~\cite{IUSA,WoodsIU}. It would also be interesting to know whether the  full nubot model, and indeed the agitation nubot model, are intrinsically universal~\cite{IUSA,WoodsIU}. That is, is there a single set of nubot rules that simulate any nubot system? Is there a single set of agitation nubot rules that simulate any agitation nubot system? Here the scale factor $m$ would be a function of the number of monomer states of the simulated system $\mathcal{N}$. 
As noted in the introduction, it remains as an interesting open research direction to look at the nubot model but with a slower rate model for agitation and movement, as seen in Brownian motion, for example.

\newcommand{\config}{C}
\newcommand{\ra}{\rightarrow}
\newcommand{\la}{\leftarrow}
\newcommand{\La}{\Leftarrow}

\section{The nubot model}\label{sec:defs}
In this section we formally define the nubot model. Figure~\ref{fig:model} gives an overview of the model and rules, and Figure~\ref{fig:agitation_example} gives examples of agitation.  Figure~\ref{fig:slow_line} shows a simple example construction using only local rules. 

\begin{figure}[t]
  \centering
    \includegraphics[width=\textwidth]{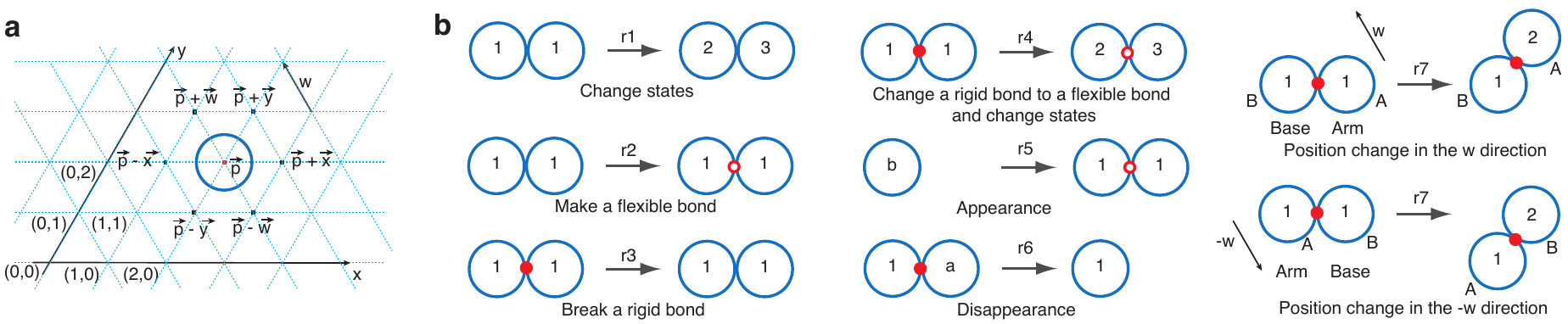}
  \caption{Overview of nubot model. (a) A nubot configuration showing a single nubot monomer on the triangular grid. (b) Examples of nubot monomer rules. Rules r1-r6 are local cellular automaton-like rules, whereas r7 effects a non-local movement. A flexible bond is depicted as an empty red circle and a rigid bond is depicted as a solid red disk. Rules and bonds are described more formally in Section~\ref{sec:defs}. Figure~\ref{fig:agitation_example} describes agitation.}
    \label{fig:model}
\end{figure}

The model uses a two-dimensional triangular grid with a coordinate system using axes $x$ and~$y$ as shown in Figure~\ref{fig:model}(a). 
A third axis, $w$, is defined as running through the origin and through $\overrightarrow{w} = -\overrightarrow{x} + \overrightarrow{y} = (-1,1)$, but we use only the~$x$ and~$y$ coordinates  to define position.
The \emph{axial directions} $\mathcal{D} =  \{ \pm\overrightarrow{x}, \pm\overrightarrow{y},  \pm\overrightarrow{w} \}$ are the unit vectors along  axes $x,y,w$. A grid point $\overrightarrow{p} \in \mathbb{Z}^2$ has the set of six \emph{neighbors}  $\{ \overrightarrow{p} + \overrightarrow{u} \mid \overrightarrow{u} \in \mathcal{D} \}$.  Let $S$ be a finite set of monomer states.
A nubot {\em  monomer} is a pair $X = (s_i, p(X)$) where $s_i \in S$ is a state and $p(X) \in \mathbb{Z}^2 $ is a grid point.
Two monomers on neighboring grid points are either connected by a \emph{flexible}  or  \emph{rigid} bond, or else have no bond (called a \emph{null} bond).
Bonds are described in more detail below.
A {\em configuration}~$C$ is a finite set of monomers along with all of the bonds between them (unless otherwise stated a configuration consists of {\em all} of the monomers on the grid and their bonds).

One configuration {\em transitions} to another either via the application of a \emph{rule} that acts on one or two monomers, or by an {\em agitation}. 
For a rule $r = (s1, s2, b, \overrightarrow{u}) \rightarrow (s1', s2', b', \overrightarrow{u}')$,  the left and right sides of the arrow respectively represent the contents of the two monomer positions before and after the application of~$r$.  Specifically, $s1, s2, s1', s2'  \in S \cup \{ \mathsf{empty} \}$ are monomer states where $\mathsf{empty}$ denotes lack of a monomer,
$b, b' \in \{\mathsf{flexible}, \mathsf{rigid}, \mathsf{null} \}$ are bond types, and $\overrightarrow{u}, \overrightarrow{u}' \in \mathcal{D}$ are unit vectors. 
$b$ is a bond type between monomers with state $s1$ and $s2$, and $\overrightarrow{u} \in \mathcal{D}$ is the relative position of a   monomer with state $s2$ to a  monomer with state  $s1$ (likewise for $b',s1',s2', \overrightarrow{u}'$).
At most one of $s1, s2$ is $ \mathsf{empty}$ (we disallow spontaneous generation of monomers from empty space).
If  $\mathsf{empty} \in \{ s1 , s2 \}$    then $b = \mathsf{null}$, likewise if $\mathsf{empty} \in \{ s1', s2' \}$    then $b' = \mathsf{null}$  (monomers can not be bonded to empty space).

A rule either does not or does involve movement (translation).  First, in the case of no movement we have $\overrightarrow{u} = \overrightarrow{u}'$. Thus we have a rule of the form $r = (s1, s2, b,\overrightarrow{u}) \rightarrow (s1', s2', b', \overrightarrow{u})$, where the monomer pair may  {\em change state} ($s1 \neq s1'$ and/or $s2 \neq s2'$ )  and/or {\em change bond} ($b \neq b'$),  examples are shown in Figure~\ref{fig:model}(b). If $s_i \in \{s1,s2\}$ is $\mathsf{empty}$ and $s_i'$ is not, then the rule is said to induce the \emph{appearance} of a new monomer  at the  empty location. If one or both monomer states go from non-empty to $\mathsf{empty}$, the rule induces the \emph{disappearance} of one or both  monomers.  Second, in the case of a movement rule, the rule has a specific form as defined in Appendix~\ref{sec:movement_definition}. Movement rules are not used in the {\em agitation nubot} model studied in this paper, and so their definition may be skipped by the reader. A rule is only applicable in the orientation specified by~$\overrightarrow{u}$.

\begin{figure}[t]
  \centering
    \includegraphics[width=\linewidth]{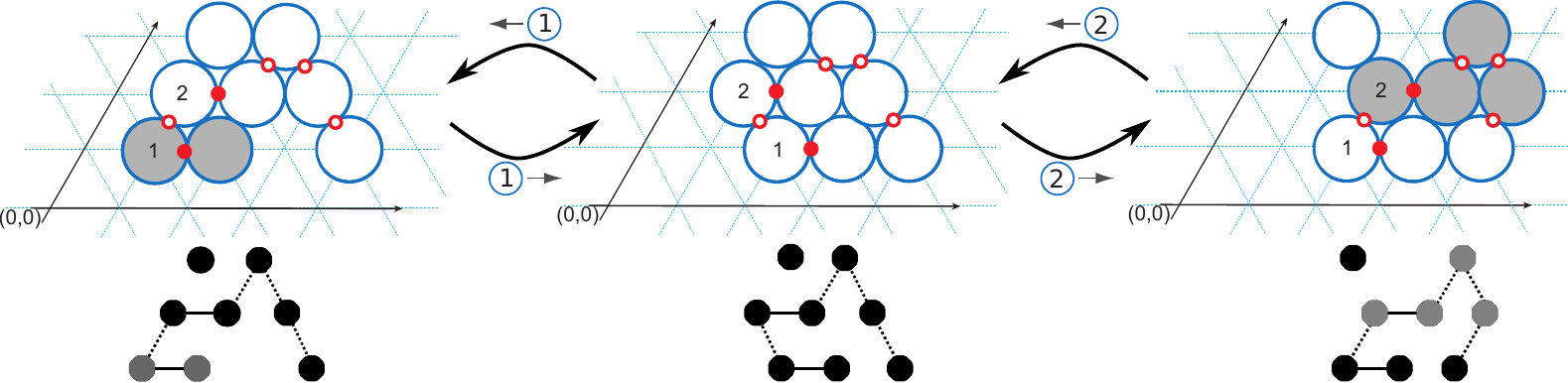}
  \caption{Top:  Example agitations. Starting from the centre
    configuration, there are~$48$ possible agitations (8 monomers, 6
    directions each), any one of which is chosen with equal
    probability $1/48$. The right configuration results from the
    agitation of the monomer at position $(1,2)$ in the direction
    $\rightarrow$, starting from the centre configuration.  The left
    configuration results from the agitation of the monomer at
    position $(2,1)$ in the direction $\leftarrow$, starting from the
    centre configuration.  The shaded monomers are the {\em agitation
      set}---the set of monomers that are moved by the agitation---when beginning from the centre configuration. Bottom:  simplified ball-and-stick representation of the monomers and their bonds, which is used in a number of other figures.}
  \label{fig:agitation_example}
\end{figure}

To define agitation we introduce some notions. 
Let $\overrightarrow{v} \in \mathcal{D}$ be a unit vector.
The $\overrightarrow{v}$-boundary of a set of monomers $S$ is defined to be the set of grid points outside of $S$  that are unit distance in the $\overrightarrow{v}$ direction from monomers in~$S$.

\begin{definition}[Agitation set]\label{def:agitationset}
Let  $\config$ be a configuration containing  monomer~$A$, and let $\overrightarrow{v} \in \mathcal{D}$ be a unit vector. The \emph{agitation set} $\mathcal{A}(\config, A, \overrightarrow{v})$ is defined to be the smallest {monomer} set in $\config$ containing $A$ that can be translated by $\overrightarrow{v}$ such that: (a)  monomer pairs in $\config$ that are joined by rigid bonds do not change their relative position to each other, (b) monomer pairs in $\config$ that are joined by flexible bonds stay within each other's neighborhood, and (c)  the $\overrightarrow{v}$-boundary of $\mathcal{A}(\config, A, \overrightarrow{v})$ contains no monomers.
\end{definition}

We now define agitation. An \emph{agitation} step acts on an entire configuration $\config$ as follows. A monomer~$A$ and unit vector $\overrightarrow{v} $ are selected uniformly at random from the configuration of monomers $\config$ and the set of six unit vectors $\mathcal{D}$ respectively. Then, the agitation set  $\mathcal{A}(\config, A, \overrightarrow{v})$ of monomers  (Definition~\ref{def:agitationset})  moves  by vector~$\overrightarrow{v}$.

Figure~\ref{fig:agitation_example} gives two examples of agitation. Some remarks on agitation: It can be seen that for any non-empty configuration the agitation set is always non-empty.   During agitation, the only change in the system configuration is in the positions of the constituent monomers in the agitation set, and all of the monomers' states and bond types remain unchanged.  We let the {\em agitation nubot} model be the nubot model without the movement rule. Agitation is intended to model movement that is not a direct consequence of a rule application, but rather results from diffusion, Brownian motion,  turbulent flow or other uncontrolled inputs of energy. 

A \emph{nubot system} $\mathcal{N} = (C_0, \mathcal{R})$ is a pair  where $C_0$ is the initial configuration, and $\mathcal{R}$ is the set of  rules. If configuration $C_i$  transitions to $C_j$ by some  rule $r \in \mathcal{R}$, or by an agitation step, we write $C_i \vdash  C_j$. A {\em trajectory} is a finite sequence of configurations $C_1, C_2, \ldots , C_\ell$ where  $C_i \vdash  C_{i+1}$ and $1 \leq i \leq \ell-1$.  A nubot system is said to {\em assemble} a target configuration $C_t$ if, starting from the initial configuration~$C_0$, every trajectory evolves to a  translation of $C_t$. 

A nubot system evolves as a continuous time Markov process. The rate for each rule application, and for each  agitation step, is 1. If there are $k$ applicable transitions for a configuration~$C_i$ (i.e.\ $k$ is the sum of the number of rule and agitation  steps that can be applied to all monomers), then the probability of any given transition being applied is $1/k$, and the time until the next transition is applied is  an exponential random variable with rate $k$ (i.e.\ the expected time is $1/k$).  The probability of a trajectory is then the product of the probabilities of each of the transitions along the trajectory, and the expected time of a trajectory is the sum of the expected times of each transition in the trajectory. Thus, $\sum_{t \in \mathcal{T}} \mathrm{Pr}[t] \cdot  \mathrm{time}(t)$ is the expected time for the system to evolve from configuration~$C_i$ to configuration~$C_j$, where~$\mathcal{T}$ is the set of all trajectories from~$C_i$ to  any configuration isomorphic (up to translation and agitation) to $C_j$, that do not pass through any other configuration isomorphic to~$C_j$,  and $ \mathrm{time}(t)$ is the expected time for trajectory $t$.

The complexity measure {\em number of monomers} is the maximum number of monomers that appears in any configuration. The {\em number of states} is the total number of distinct monomer states that appear in the rule set. {\em Space} is the maximum area, over the set of all reachable configurations, of the minimum area $l \times w$ rectangle (on the triangular grid) that, up to translation, contains all monomers in the configuration.

\subsection{Example: A simple, but  slow, method to build a line}
Figure~\ref{fig:slow_line}, taken from~\cite{nubots}, shows a simple method to build a length~$n$ line in expected time~$n$, using~$O(n)$ monomer states. Here, the program is acting as an asynchronous cellular automata and is {\em not} exploiting the ability of a large set of monomers to quickly move via agitation. Our results show that by using agitation we can do much better than this very slow and expensive (many states) method to grow a line.

  \begin{figure}[t] %
    \begin{center}
      \includegraphics[width = \textwidth]{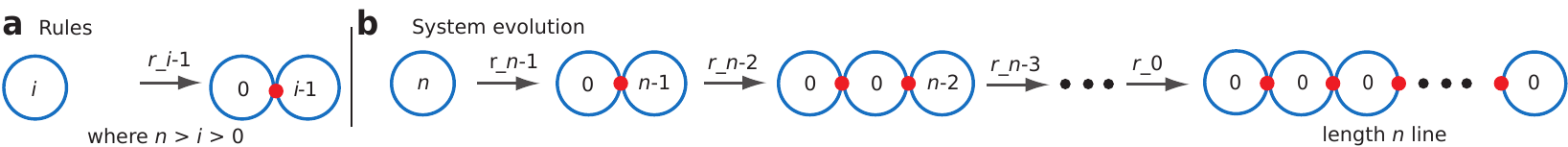}
      \caption{A nubot system that slowly grows a length~$n$ line in $O(n)$ time,~$n$ monomer states, and using space $n \times 1$.
(a)~Rule set: $\mathcal{R}_n^{\textrm{slow line}} = \{ r_i \mid r_i = (i, \mathsf{empty}, \mathsf{null}, \vec{x}) \ra (0, i-1, \mathsf{rigid}, \vec{x}), \mathrm{~where~}  n >  i > 0\}$. (b)~Starting from an initial configuration with a single monomer in state $n$, the system generates a length $n$ line. Taken from~\cite{nubots}.}
      \label{fig:slow_line}
    \end{center}
  \end{figure}

\section{Synchronization via agitation}\label{sec:sync}
In this section we describe a fast method that uses agitation to synchronize the states of a line of monomers, or in other words, to reach consensus. Specifically, the {\em synchronization} problem is: given a length-$m$ line of monomers that are in a variety of states but that all eventually reach some target state $s$, then after all $m$ monomers have reached state $s$, communicate this fact to all $m$ monomers in $O(\log m)$ expected time. 

\begin{figure}%
  \centering
    \includegraphics[width=\textwidth]{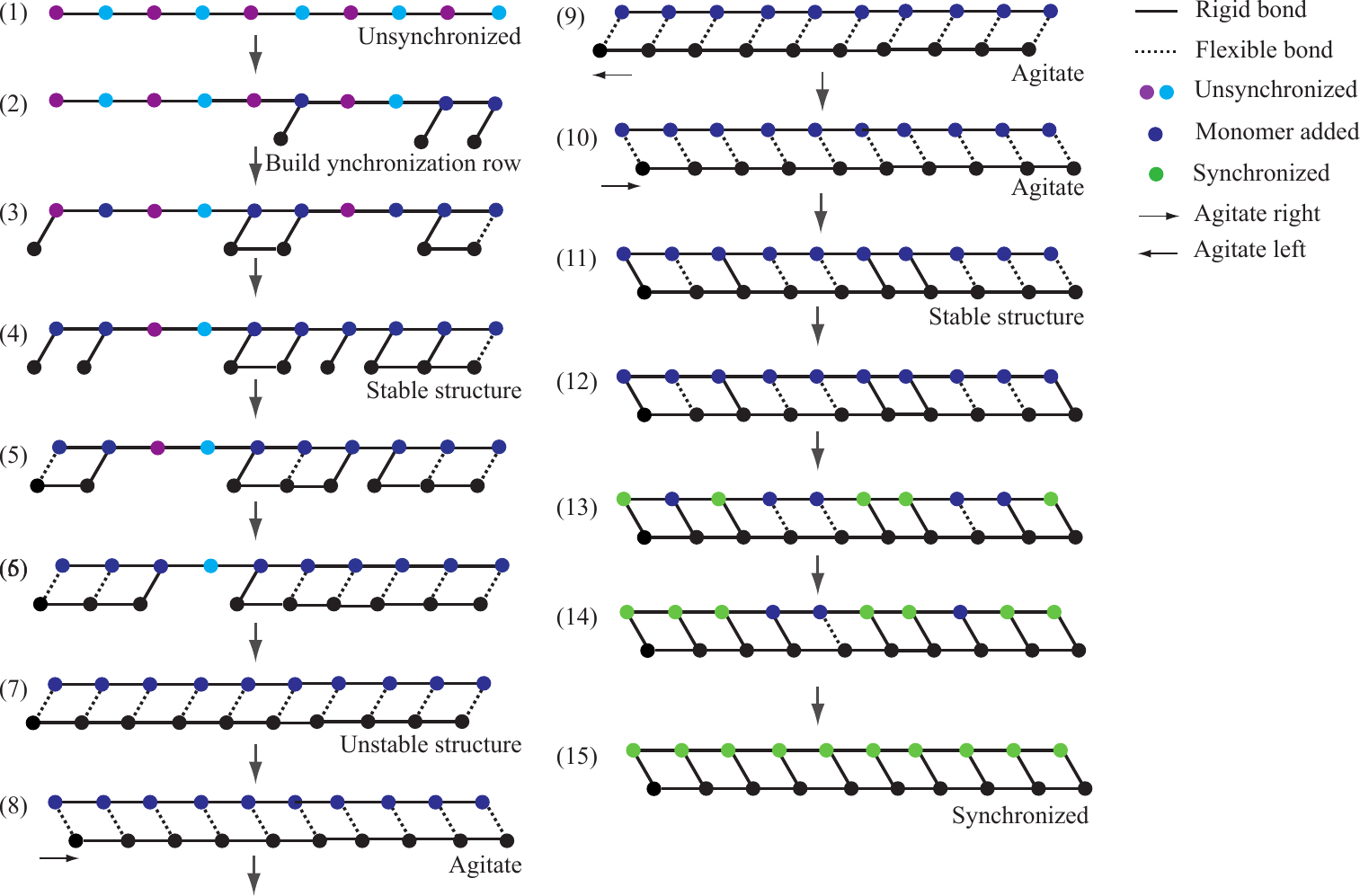}
  \caption{Synchronization via agitation: a nubot construction to synchronize (or send a signal, or reach consensus) between $n$ monomers in $O(\log n)$ expected time.  Steps (1)--(6): build a row of monomers called the synchronization row. Rigid bonds are converted to flexible bonds in such a way that agitations do not change the relative position of monomers. A structure with this property is said to be  {\em stable}. Specifically, monomers are added using rigid vertical  bonds; new monomers join to left-right neighbours using rigid horizontal bonds; when a monomer is bound horizontally to both neighbours it makes its {\em vertical} bond flexible; monomers on the extreme left and right of the synchronization row are treated differently---their vertical bonds become flexible after joining any horizontal neighbour. This enforces that the entire structure is stable up until the final horizontal bond is added, and then the structure becomes unstable in such a way that the synchronization row can agitate left-right relative to the backbone row. Steps (7)--(10), the structure is not stable, and the synchronization row is free to agitate left and right relative to the backbone row.  While agitating, the synchronization row spends half the time to the left, and half to the right, of the backbone row. However, whenever the synchronization row is to the right a rigid bond may form between any synchronization row monomer and the backbone monomer directly above, hence the first such bond forms in expected time~$1/m$, where~$m$ is the length of the backbone. Then all bonds become rigid in~$O(\log m)$ expected time, during which time (12)--(15) the backbone monomers change their state to the final {\em synchronized} state.}
    \label{fig:sync}
\end{figure}

\begin{lemma}[Synchronization]\label{lem:sync_agitate}
A line of monomers of length $m \in \mathbb{N}$ can be synchronized (all monomers put into the same state) in $O(\log m)$ expected time,  $m \times 2$ space and~$O(1)$~states. 
\end{lemma}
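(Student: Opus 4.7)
The plan is to implement the construction depicted in Figure~\ref{fig:sync}. Above each backbone monomer (once it has reached the common state $s$) I would grow a fresh sync-row monomer via a rigid vertical bond; adjacent sync monomers rigidly bond horizontally as soon as they become neighbors; a sync monomer switches its vertical bond from rigid to flexible the moment it has both horizontal neighbors (one, for an endpoint); and a shifted vertical bond rule lets a sync monomer create a rigid vertical bond to the backbone monomer immediately to the right of its original partner, simultaneously flipping that backbone monomer to a ``synchronized'' state. The analysis then splits into an assembly phase, a sliding phase, and a locking phase, each taking $O(\log m)$ expected time.

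For the assembly phase I would argue as follows. Each of the $m$ sync monomers appears via an independent rule firing at rate $1$, so by a standard maximum-of-independent-exponentials bound they are all in place within expected time $H_m = O(\log m)$. The $m-1$ horizontal bond rules and the subsequent local bond-swap rules are similarly each enabled at rate $1$ once their preconditions hold, contributing a further $O(\log m)$ expected time before every horizontal bond is rigid and every vertical bond has been relaxed to flexible (save possibly the one that is relaxed upon installation of the final horizontal bond).

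The core of the argument is a stability invariant: until the last horizontal bond has been installed, every agitation of any monomer in the structure translates the whole structure uniformly, so no sliding of the sync row relative to the backbone can occur and no shifted vertical bond can form prematurely. The proof is local, by case analysis on the bond configuration of a sync monomer: whenever its vertical bond has been switched to flexible, the rule set guarantees horizontal rigid bonds that chain it (possibly through further relaxed monomers) to a sync monomer still pinned vertically to the backbone. By Definition~\ref{def:agitationset}, any agitation of any monomer in the structure therefore picks up every other monomer in it. The invariant breaks exactly when the last horizontal bond appears, at which point the sync row becomes a single rigid body that is only flexibly tethered to the backbone.

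From that moment on, horizontal agitations of any sync monomer have agitation set equal to the entire sync row, so the sync row performs a symmetric $\pm\overrightarrow{x}$ random walk relative to the backbone. Whenever it is shifted right, any of the $m$ shifted vertical bond rules is enabled at rate $1$, so the first rigid vertical bond in the shifted frame appears in expected time $O(1/m)$. That first bond pins the registration, after which the remaining $m-1$ shifted bond rules fire independently at rate $1$ each, contributing a final $O(\log m)$ expected time via coupon collector; each firing transitions its backbone monomer to the synchronized state. Summing the four contributions gives $O(\log m)$ expected time overall, and the construction evidently uses one extra row of monomers and a constant-size rule set, yielding the claimed $m\times 2$ space and $O(1)$ states. \textbf{The main obstacle} is verifying the stability invariant purely from the local bond-swap discipline together with Definition~\ref{def:agitationset}; once that invariant is in hand, the remaining time bounds are standard exponential-clock calculations and compose cleanly.
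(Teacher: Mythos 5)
Your proposal is correct and follows essentially the same route as the paper's proof (which is given via Figure~\ref{fig:sync} and its caption): grow the synchronization row with rigid vertical bonds, relax each vertical bond only after both horizontal neighbors are rigidly attached (endpoints special-cased) so the structure stays stable until the final horizontal bond, then let the completed row random-walk relative to the backbone, lock the first shifted rigid bond in expected time $O(1/m)$, and finish the remaining bonds and state changes in $O(\log m)$ by a maximum-of-exponentials argument. Your explicit statement and local verification of the stability invariant is a slightly more detailed writeup of what the paper asserts, not a different argument.
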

The proof is described in Figure~\ref{fig:sync} and its caption.  The figure gives a synchronization routine that is used throughout our constructions. This is a modification of the synchronization routine in~\cite{nubots}, made to work with agitation instead of the movement rule.

\begin{figure}[t]
  \centering
  \includegraphics[width=\textwidth]{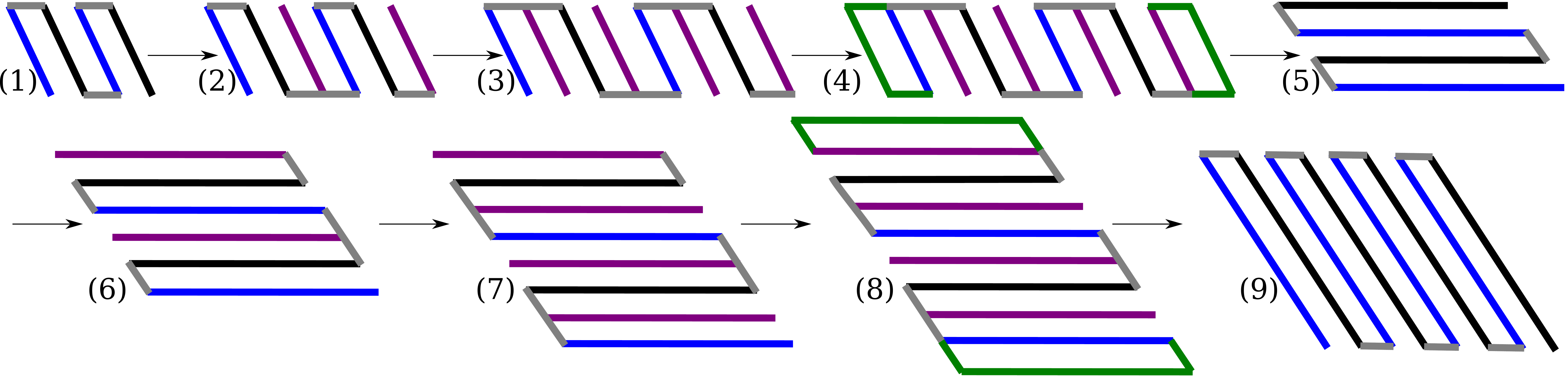}
\caption{An overview of the square doubling algorithm that grows an $m \times m$ zig-zag ``comb'' to a $2m \times 2m$ comb. (1) An initial $m \times m$ comb with vertical teeth, is (2)  ``half-doubled'' to give a $\lfloor1.5m\rfloor \times m$ comb, which is (3) again half-doubled to give a $2m \times m$ comb.   (4)--(5) The internal bond structure is reconfigured to give a comb with horizontal teeth. (6)--(7) this comb is vertically doubled in size and then (8)--(9) reconfigured to give a $2m \times 2m$ comb with vertical teeth.  The green
  lines indicate temporary synchronization rows that are used when
  reorientating the teeth of the comb.}\label{fig:squareoverview}
\end{figure}

\begin{figure}[t]
  \centering
    \includegraphics[width=\textwidth]{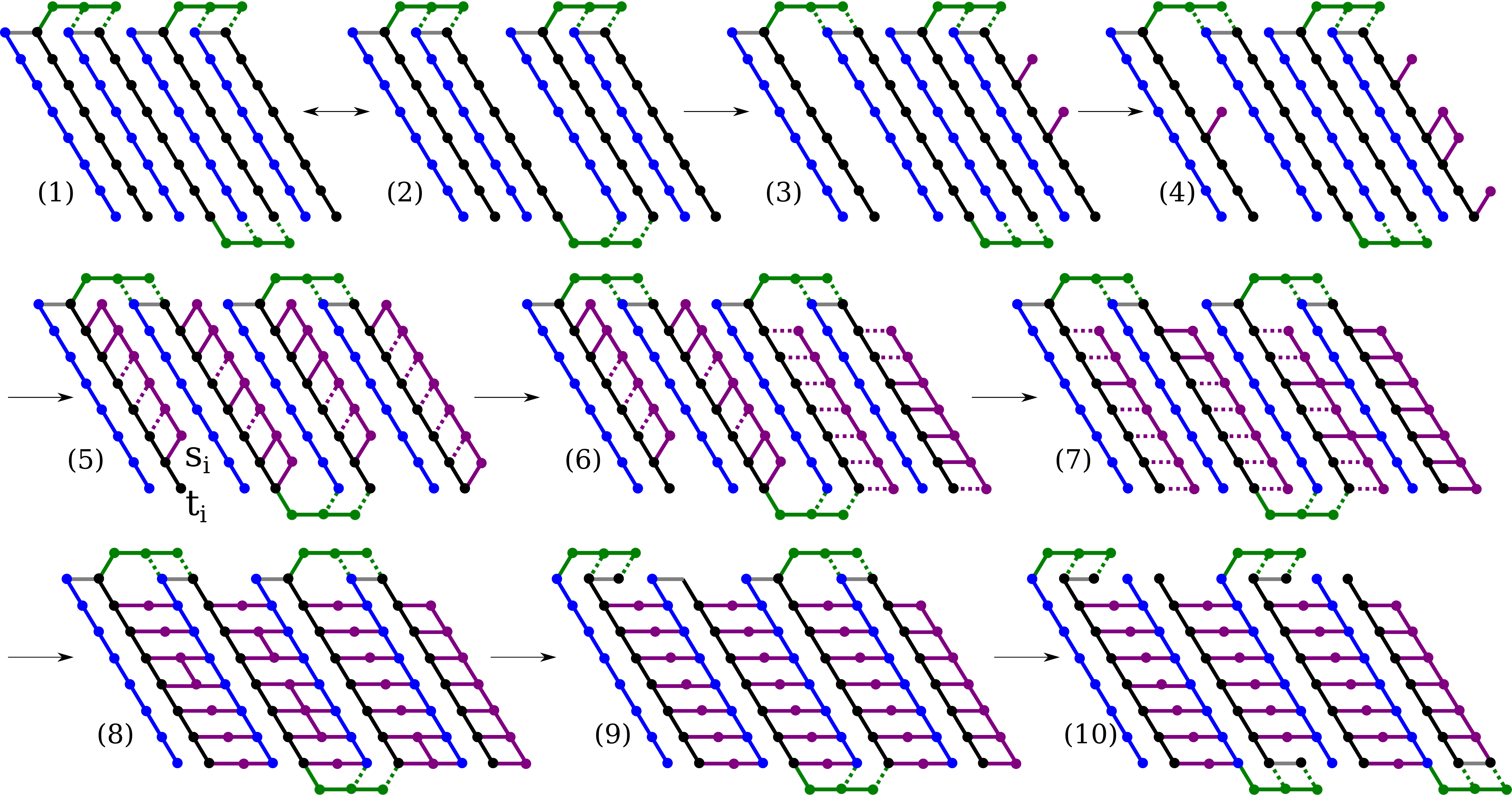}
\caption{The $m \times m$ to $\lfloor 1.5m \rfloor  \times m$ horizontal half-doubling algorithm, for $m=8$. This shows the details for step (1) to (2) of Figure~\ref{fig:squareoverview}. Monomer states are denoted using colours (bonds are also coloured for readability). Rigid bonds are solid, flexile bonds are dotted.  See main text for details.}\label{fig:squaredoubling}
\end{figure}

\section{Building squares via agitation}\label{sec:square}
This section contains the proof of our first main result,  Theorem~\ref{thm:square}, which we restate here:

\thmsquare*

\begin{proof}%
\noindent{\bf Overview of construction. \ } 
Figure~\ref{fig:squareoverview} gives an overview of our construction. A binary string that represents $n \in \mathbb{N}$ in the standard way is encoded as a string $x$, of length $\ell = \lfloor \log_2 n \rfloor +1$, of adjacent rigidly bound binary nubot monomers (each in state 0 or 1) placed somewhere on the hexagonal grid.

The leftmost of these monomers begins an iterated square-doubling process, that happens exactly $\ell$ times.  Each iteration of this square-doubling process: reads the current most significant bit $x_i$ of $x$, where $0\leq i \leq \ell $,  stores it in the state of a monomer in the top-left of the square and then deletes $x_i$. Then, if $x_i=0$ it takes an $m \times m$ comb
structure and doubles its size to give a $2m \times 2m$ comb structure, or if $x_i=1$ it gives a $(2m+1) \times (2m+1)$ structure. We will prove that each square-doubling step takes $O(\log m)$ time. There are $\ell$ rounds of square-doubling, i.e.\ the number of input monomers $\ell$ act as a counter to control the number of iterations, and since $m\leq n$ throughout, the process completes in the claimed expected time of $O(\log^2 n)$. The main part of the construction, detailed below, lies in the details of how each doubling step works and an expected time analysis, and constitutes the remainder of the proof.

\noindent{\bf Square-doubling. \ }
A single square-doubling
consists of four phases: two horizontal ``half-doublings'' and two
vertical half-doublings. Figure~\ref{fig:squareoverview} gives an overview. Figure~\ref{fig:squaredoubling} gives the
details of how we do the first of two horizontal half-doublings;  more
precisely, the figure shows how to go from an $m\times m$ structure to
a structure of size $\lfloor 1.5m\rfloor\times m$. Assume we are at a configuration
with $m$ vertical comb teeth (Figure~\ref{fig:squaredoubling}(1)) each
of height $m$ (plus some additional monomers). Teeth are numbered from
the left $t_1, t_2, \ldots , t_m$.  Each tooth monomer undergoes
agitation. It can be seen in Figure~\ref{fig:squaredoubling}(1)--(4), from the bond structure, that the only
agitations that change the relative position of monomers are left or
right agitations which move the green flexible bonds (depicted as
dashed lines)---all other agitations move the entire structure
without changing the relative positions of any monomers. Furthermore,
left-right monomer
agitations can create gaps between teeth~$t_i$ and~$t_{1+1}$ for even
$i$ only---for odd~$i$, teeth $t_i$ and $t_{1+1}$ are rigidly
bound.  An example of a gap opening between tooth $t_4$ and tooth $t_5$ is
shown in Figure~\ref{fig:squaredoubling}(2).
If a gap
appears between teeth $t_i$ and $t_{1+1}$ then each of the $m$ monomers in tooth~$t_i$ tries to attach a new purple monomer to its right (with a rigid bond, and each at rate 1), so attachment for any monomer to tooth~$i$ happens at rate~$m$. (Note that the gap is closing and opening at some rate also---details
in the time analysis.) After the first such purple monomer appears, the gap $g_i$, to the
right of tooth $t_i$, is said to be ``initially filled''.  For
example, in Figure~\ref{fig:squaredoubling}(4), gap $g_2$ is initially filled.

When gaps appear between teeth monomers, and then become initially filled,
additional monomers are attached, asynchronously and in
parallel. Monomers attaching to tooth $t_i$ initially attach by rigid
bonds as shown in
Figure~\ref{fig:squaredoubling}(4). As new monomers attach to
$t_i$, they then attempt to bind to each other vertically, and after
such a binding event they undergo a sequence of bond
changes---see
Figure~\ref{fig:squaredoubling}(4)-(9). Specifically, let $s_{i,j}$ be the $j^{\mathrm{th}}$ monomer
on the newly-forming ``synchronization row'' $s_i$ adjacent to $t_i$. When the neighbors
$s_{i,j-1}, s_{i,j+1}$ of monomer $s_{i,j}$ appear, then $s_{i,j}$
forms rigid bonds with them (at rate 1). After this, $s_{i,j}$ changes
its rigid bonds to $t_{i,j}$ to flexible. The top and bottom monomers
$s_{i,1}$, $s_{i,m}$ are special cases: their bonds to $t_{i,1}$,
$t_{i,m}$ become flexible after they have joined to their (single)
neighbors $s_{i,2}$, $s_{i,m-1}$.  Changing bonds in this order
guarantees that only {\em after all monomers} of $s_i$ have attached,
and not before, the synchronization row $s_i$ is free to agitate up
and down relative to the tooth~$t_i$ (this is the same technique for
building a synchronization row as described in
Section~\ref{sec:sync}). The new vertical synchronization row $s_i$
is then free to agitate up and down relative to its left-adjacent
tooth~$t_i$. When $s_{i,j}$ is ``down'' relative to~$t_{i,j}$ the
horizontal bonds between~$s_{i,j}$ and~$t_{i,j}$ become rigid, at rate
1 per bond (Figure~\ref{fig:squaredoubling}(6)--(7)).  %
When the vertical synchronization of $s_i$ is done, a message is sent from the top monomer $t_{i,m}$
of~$t_i$ (after its bond to $s_{i,m}$ becomes rigid) to the adjacent
monomer at the top of the comb. This results in the formation of a
horizontal synchronization row at the top of the structure. Using a
similar technique, a horizontal synchronization row grows at the
bottom of the structure.  After all~$2\lfloor 0.5 m\rfloor $ such messages have arrived,  and not before, the horizontal synchronization rows at the top and
bottom of the (now) $\lfloor 1.5m\rfloor \times m$ comb change the last of their
rigid (vertical) bonds to flexible and those synchronization rows are
free to agitate left/right and then lock into position, signaling to all monomers
along their backbone that the first of the four half-doublings of the
comb has finished.

The system prepares for the next horizontal half-doubling which will
grow the $\lfloor 1.5m\rfloor \times m$ comb to be an $2m \times m$
comb. The bonds at the top and bottom horizontal synchronization rows
reconfigure themselves (preserving connectivity of the overall
structure---see the description of reconfiguration below) in such a
way as to build the gadgets needed for the next half-doubling.  (Specifically, we want to now double teeth $t_i$ for odd $i \leq m$.) The construction proceeds similarly to the first half-doubling, except for the following change. After tooth synchronization row $s_1$ has synchronized, tooth $t_1$ grows a vertical synchronization row to its left, and after $s_{m}$ has synchronized, tooth $t_{m}$ grows a vertical synchronization row to its right (Figure~\ref{fig:squareoverview}(4)). These two synchronization rows are used to set-up the bond structure for the next stage of the construction (where we will reconfigure the entire comb so that the teeth are horizontal).

This covers the case of the input bit being~0. Otherwise, if the input bit
is~1, adding an extra tooth can be done using the single vertical
synchronization row on the right---it reconfigures
itself to have the bond structure of a tooth and then grows a new
vertical synchronization row.

\noindent{\bf Reconfiguration. \ } Next we describe how the
comb with vertical teeth is reconfigured to have horizontal teeth, as in
Figure~\ref{fig:squareoverview}(4)--(5). After synchronization row $s_i$ has
synchronized, each monomer $s_{i,j}$ in $s_i$ already has a rigid
horizontal bond to monomer $t_{i,j}$. After both $s_i$ and $s_{i+1}$
have synchronized, for all~$j$, monomers $s_{i,j}$ and $t_{i+1,j}$
bond using a horizontal rigid bond (at rate~1) for each pair $(s_{i,j} ,
t_{i+1,j}$). Monomers $t_i$ and $s_i$ then delete their vertical rigid
bonds in such a way that preserves the overall connectivity of the
structure. (For these bond reconfigurations we are simply using local---asynchronous cellular automaton style---rules that  preserves  connectivity. This trick has been used in previous nubot constructions in Section 6.5 of~\cite{nubots} and in~\cite{nubotsDNA19}.) This leads to a bond structure similar to that in Figure~\ref{fig:squaredoubling}(10) both with roughly twice the number of horizontal purple bonds: i.e.\ for each $j$,  $1\leq j\leq m$, there is now a horizontal straight line of purple bonds from the $j$th monomer on the leftmost vertical line to the $j$th monomer on the rightmost vertical line.  While this reconfiguration is taking place, the leftmost
and rightmost  vertical synchronization rows synchronize and delete
themselves, leaving appropriate gadgets to connect the horizontal
teeth: this signals the beginning of the next  two half-doubling
steps.

\noindent{\bf Expected time, space and states analysis.}
 Lemma~\ref{lem:doubling_analysis} states that the expected time to
perform a half-doubling is $O(\log m)$ for an $m \times m$ comb, and
since $n\leq m$, the slowest half-doubling takes expected time $O(\log
n)$. Each doubling involves 2 horizontal half-doubling phases, and 2
vertical half-doubling phases, and the 4 phases are separated by discrete
synchronization events. Reconfiguration involves $O(n^2)$ bond and
state change events, that take place independently and in parallel
($O(\log n)$ expected time) as well as a constant number of
synchronizations that each take $O(\log n)$ expected time. Hence for $4( \lfloor \log_2 n \rfloor +1)$ such half-doublings, plus $\lfloor \log_2 n \rfloor +1$ reconfigurations, we get an overall expected time of~$O(\log^2 n)$.

 We've sketched how to make an $n\times n$ structure in  $(n+2) \times (n+2)$ space. To make the construction work in $n\times n$ space, we first subtract~2 from the input, and build an $(n-2) \times (n-2)$ structure, and then at the final step have the leftmost and rightmost horizontal, and topmost and bottommost vertical, synchronization rows become rigid and be the border of the final $n\times n$ structure. A final monomer is added on the top left corner and we are done.  By stepping through the  construction it can be seen that $O(1)$ monomer states are sufficient.\end{proof}

Intuitively, the following lemma holds because the long (length $m$) teeth allow for rapid, $O(1)$ time per tooth, and parallel insertion of monomers to expand the width of the comb. This intuition is complicated by the fact that teeth agitating open and closed may temporarily block other teeth inserting a new monomer. However, after an insertion actually happens further growth occurs independently and in parallel, taking logarithmic expected time overall.

\newcommand{\lemDoublingAnalysis}{A comb with $m$ teeth where each tooth is of height $m$, can be horizontally half-doubled to length $\lfloor 1.5m\rfloor$ in expected time $O(\log m)$ in the agitation nubot model.}
\begin{lemma}\label{lem:doubling_analysis}
\lemDoublingAnalysis
\end{lemma}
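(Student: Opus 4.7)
The plan is to decompose the horizontal half-doubling into a constant number of phases, bound each phase by $O(\log m)$ expected time, and sum. The phases are: (1) the first purple monomer nucleates into each of the $\lfloor m/2 \rfloor$ gaps between rigidly-bonded pair-teeth; (2) the remaining positions of each new vertical synchronization row $s_i$ fill in by nucleation from the adjacent tooth $t_i$; (3) the local bond-change cascade makes each full $s_i$ stable-then-loose relative to $t_i$; (4) each $s_i$ locks rigidly to $t_i$ via vertical agitation; and (5) the top and bottom horizontal synchronization rows form and synchronize along the comb. Phases (4) and (5) are direct invocations of Lemma~\ref{lem:sync_agitate} on lines of length $\Theta(m)$, so each contributes $O(\log m)$. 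Phase (3) is a parallel cascade of local rate-$1$ rules; since the $m$ handshakes within a row proceed independently, each row finishes in $O(\log m)$ expected time by a coupon-collector bound, and this happens in parallel across all $m/2$ rows.

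The heart of the analysis is Phase (1). Fix a gap $g_i$ between teeth $t_i$ and $t_{i+1}$ with $i$ even. The gap opens and closes only in response to left/right agitations acting on the rigid pair-components on either side; all other agitations translate the entire structure rigidly and leave the open/closed status unchanged. By the symmetry between the rates of opening and closing agitations, a two-state Markov chain argument shows that $g_i$ is open for a constant fraction of time in stationarity. Whenever $g_i$ is open, each of the $m$ monomers in $t_i$ independently fires an insertion rule at rate~$1$, so the effective rate at which the first purple monomer nucleates into $g_i$ is $\Omega(m)$. Distinct gaps are driven by disjoint agitation events and insertion rules, so I would couple the per-gap first-insertion times to independent exponentials of rate $\Omega(m)$ and apply a standard maximum-of-exponentials bound to conclude that every gap receives its first monomer in $O((\log m)/m) = o(\log m)$ expected time. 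Phase (2) then follows by coupon collector: once a gap contains any inserted monomer the rigidity of the two length-$m$ tooth bars pins the gap open, each of the remaining $m-1$ positions fills at rate $1$ independently, and the full row $s_i$ is in place in $O(\log m)$ expected time across all $m/2$ rows in parallel.

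Summing the five phases yields the claimed $O(\log m)$ bound. The main obstacle is making the stationary-open-probability argument in Phase (1) rigorous while simultaneously accounting for the asynchronous insertion rules of Phases (2)--(3) that continually change the topology of the agitation graph. I would handle this by observing that it suffices to lower-bound (not compute exactly) the open-probability by a constant, which follows from a coupling to a simpler two-state Markov chain that ignores concurrent rule applications and is absorbed at the first insertion event; any subsequent rule applications only help, in the sense that they can only pin more of the gap open. A secondary technicality is verifying that the Phase (3) bond-change cascade cannot be stalled by concurrent agitation, which follows from the stability-preserving ordering of bond changes described in Section~\ref{sec:sync} and sketched in Figure~\ref{fig:squaredoubling}, designed precisely so that the structure remains stable until every monomer in a row has completed its handshake.
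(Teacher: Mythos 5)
Your phase decomposition matches the structure of the paper's argument (first insertion per gap, then parallel filling, then synchronization), and Phases (2)--(5) are handled essentially as the paper handles them. The genuine gap is in Phase (1): the claim that ``by the symmetry between the rates of opening and closing agitations, $g_i$ is open for a constant fraction of time in stationarity.'' There is no such symmetry. The rate at which a gap \emph{opens} is only guaranteed to be $\Omega(m)$, but the rate at which an open gap \emph{re-closes} can be as large as $\Theta(m^2)$: once the gap is open the structure splits into two pieces joined by flexible bonds, and an agitation of essentially \emph{any} of the up to $\Theta(m^2)$ monomers on one side, in the direction that shoves that side back toward the other, closes the gap again. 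In the worst case the gap is therefore open only a $\Theta(1/m)$ fraction of the time, the effective nucleation rate is $\Theta(m)\cdot\Theta(1/m)=\Theta(1)$, and the first insertion into a fixed gap takes $\Theta(1)$ expected time --- not the $O(1/m)$ your coupling to rate-$\Omega(m)$ exponentials asserts. Your $o(\log m)$ bound for Phase (1) is consequently unjustified.

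The repair is a race/cycle argument rather than a stationary-distribution argument. Each time the gap opens, the insertion event (rate exactly $m$) competes with re-closing (rate at most $m^2$), so it wins with probability at least $\frac{m}{m+m^2}\geq\frac{1}{2m}$; hence the number of open/closed cycles before the first insertion is stochastically dominated by a geometric variable with mean $O(m)$, and each cycle has expected duration $O(1/m)$ (both transitions out of either state occur at rate at least $m$). This gives $O(1)$ expected time per gap, and Markov's inequality plus memorylessness upgrades this to an exponential tail, $\Pr[\text{time} > ct]\leq 2^{-t}$. That tail bound is what you need to conclude that the maximum over the $m/2$ gaps is $O(\log m)$ --- note this requires only a union bound, which is fortunate, because your independence claim (``distinct gaps are driven by disjoint agitation events'') is also false: a single agitation can simultaneously open or close several gaps, so the per-gap rate bounds must be shown to hold uniformly over all configurations of the rest of the structure, as the paper does. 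The final $O(\log m)$ conclusion of the lemma survives, since Phase (1) still contributes only $O(\log m)$ rather than $o(\log m)$, but as written the central step of your argument does not go through.
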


\begin{proof}

\newcommand{\open}{{\tt open}}
\newcommand{\closed}{{\tt closed}}
\newcommand{\filled}{{\tt initially filled}}

\newcommand{\oldanalysis}{Consider tooth $i$, where $1\leq  i \leq m$ for $i$ even.
A tooth can be \open{}, \closed{} or \filled{} (one new monomer inserted).
Although the remaining structure can affect the transition probabilities relevant to tooth $i$, in any state, the rate at which the tooth transitions from \closed{} to \open{} is at least $m$, the rate that it transitions from \open{} to \closed{} is at most $m^2$, and the rate at which it transitions from \open{} to \filled{} is exactly $m$.
We define a new Markov process, with states \open{}, \closed{}, and \filled{} and the transition probabilities just described, which is easier to analyze.
Clearly, the random variable representing the time for this process to transition from \closed{} to \filled{} upper bounds the random variable representing the time for the real nubots process to do the same for a single tooth.
We now show that this new random variable has expected value $O(1)$.

Let $T_{\mathrm{CF}}$ be the random variable representing the time to go from \closed{} to \filled{}.
Let $T_{\mathrm{CO}}$ be the random variable representing the time to go from \closed{} to \open{}.
Let $T_{\mathrm{OC}}$ be the random variable representing the time to go from \open{} to \closed{}, conditioned on that transition happening, and define $T_{\mathrm{OF}}$ similarly for going from \open{} to \filled{}.
Note that $\Exp[T_{\mathrm{CO}}] \leq \frac{1}{m}$, $\Exp[T_{\mathrm{OC}}] \geq \frac{1}{m^2}$, and $\Exp[T_{\mathrm{OF}}] = \frac{1}{m}$. Let $E_i$ represent the event that the process revisits state \closed{} exactly $i$ times after being in state \open{} (and before reaching state \filled{}).
Let $T_i$ be the random variable representing the time to take exactly $i$ cycles between the states \open{} and \closed{}.
Let $C$ be the random variable representing the number of cycles taken between the states \open{} and \closed{} before transitioning to state \filled{}.
Each time the process is in state \open{}, it has probability \dwm{``at least''; --- here is where we use that the rate from open to closed  is $\leq m^2$  } $\frac{1}{m}$ to go to state \filled{}, so $C$ is a geometric random variable with $\Exp[C] = m$.\dwm{$\Exp[C] \leq m$} 
Then  \dwm{looks like $1/{m^2}$ should be changed to $1/m$ in the second line. i.e. we state that $\Exp[T_{\mathrm{OC}}] \geq \frac{1}{m^2}$, but we are claiming to put an upper bound on the expected time, so really here we want to use the fact that $\Exp[T_{\mathrm{OC}}] \leq \frac{1}{m}$ (here we use that the rate from open to closed is $\geq m$---this would beed to be added in text above). This results in the final expression being $\leq 3$, which implies changing more text in the equations, as well as before, and after.  }\dwm{note to self: in one place we use $\Exp[T_{\mathrm{OC}}] \geq \frac{1}{m^2}$ and in another we use $\Exp[T_{\mathrm{OC}}] \leq \frac{1}{m}$}
\begin{eqnarray*}
  \Exp[T_{\mathrm{CF}}]
  &=&
  \Exp[T_{\mathrm{CO}}] + \Exp[T_{\mathrm{OF}}] + \sum_{i=0}^\infty \Pr[E_i] \cdot \Exp[T_i]
  \\&\leq&
  \frac{2}{m} + \sum_{i=0}^\infty \Pr[E_i] \cdot i \cdot \left(\frac{1}{m^2} + \frac{1}{m}\right)
  \\&=&
  \frac{2}{m} + \left(\frac{1}{m^2} + \frac{1}{m}\right) \sum_{i=0}^\infty \Pr[E_i] \cdot i
  \\&=&
  \frac{2}{m} + \left(\frac{1}{m^2} + \frac{1}{m}\right) \Exp[C]
  \\&=&
  \frac{2}{m} + \left(\frac{1}{m^2} + \frac{1}{m}\right) m
  \leq
  2.
\end{eqnarray*}
By Markov's inequality, the probability is at most $\frac{1}{2}$ that it will take more than time 4 to reach from \closed{} to \filled{}.
Because of the memoryless property of the Markov process, conditioned on the fact that time $t$ has elapsed without reaching state \filled{}, the probability is at most $\frac{1}{2}$ that it will take more than $t+4$ time to reach state \filled{}.
Hence for any $t > 0$, the probability that it will take more than than $4t$ time to reach from state \closed{} to \filled{} is at most $2^{-t}$.

Since this tail probability decreases exponentially, it follows that for $m/2$ teeth, the expected time for all of them to reach state \filled{} is~$O(\log m)$.\end{proof}
}  %

Consider tooth $i$, where $1\leq  i \leq m$ for $i$ even.
A tooth can be \open{}, \closed{} or \filled{} (one new monomer inserted).
Although the remaining structure can affect the transition probabilities relevant to tooth $i$, in any state, the rate at which the tooth transitions from \closed{} to \open{} is at least $m$, the rate that it transitions from \open{} to \closed{} is at least $m$ and at most $m^2$, and the rate at which it transitions from \open{} to \filled{} is exactly $m$.
We define a new simpler Markov process, with states \open{}, \closed{}, and \filled{} and the transition probabilities just described, which is easier to analyze than the underlying full process.
Clearly, the random variable representing the time for the new simpler process to transition from \closed{} to \filled{} upper bounds the random variable representing the time for the underlying full nubot process to do the same for a single tooth.
We now show that this random variable has expected value~$O(1)$.

Let $T_{\mathrm{CF}}$ be the random variable representing the time to go from \closed{} to \filled{}.
Let $T_{\mathrm{CO}}$ be the random variable representing the time to go from \closed{} to \open{}.
Let $T_{\mathrm{OC}}$ be the random variable representing the time to go from \open{} to \closed{}, conditioned on that transition happening, and define $T_{\mathrm{OF}}$ similarly for going from \open{} to \filled{}.
Note that $\Exp[T_{\mathrm{CO}}] \leq \frac{1}{m}$, $ \frac{1}{m^2} \leq \Exp[T_{\mathrm{OC}}] \leq \frac{1}{m} $, and $\Exp[T_{\mathrm{OF}}] = \frac{1}{m}$. Let~$E_i$ represent the event that the process revisits state \closed{} exactly $i$ times after being in state \open{}, and immediately before reaching state \filled{}.
Let $T_i$ be the random variable representing the time to take exactly $i$ cycles between the states \open{} and \closed{}.
Let $C$ be the random variable representing the number of cycles taken between the states \open{} and \closed{} before transitioning to state \filled{}.
Each time the process is in state \open{}, independently of how many cycles have happened (memoryless), it has probability $\geq \frac{1}{m}$ to go to state \filled{}, so $C$ is upper-bounded by a geometric random variable with $\Exp[C] \leq m$.  
Then  
\begin{eqnarray*}
  \Exp[T_{\mathrm{CF}}]
  &=&
  \Exp[T_{\mathrm{CO}}] + \Exp[T_{\mathrm{OF}}] + \sum_{i=0}^\infty \Pr[E_i] \cdot \Exp[T_i]
  \\&\leq&
  \frac{2}{m} + \sum_{i=0}^\infty \Pr[E_i] \cdot \Exp[T_i]
\end{eqnarray*}
and since $\Exp[T_{\mathrm{CO}}] \leq \frac{1}{m}$ and $ \Exp[T_{\mathrm{OC}}] \leq \frac{1}{m} $   we can substitute for $\Exp[T_i]$
\begin{eqnarray*}
  \Exp[T_{\mathrm{CF}}]
  &\leq&
  \frac{2}{m} + \sum_{i=0}^\infty \Pr[E_i] \cdot i \cdot \left(\frac{1}{m} + \frac{1}{m}\right)
  \\&=&
  \frac{2}{m} + \left(\frac{1}{m} + \frac{1}{m}\right) \sum_{i=0}^\infty \Pr[E_i] \cdot i
  \\&=&
  \frac{2}{m} + \left(\frac{2}{m}\right) \Exp[C]
  \\&=&
  \frac{2}{m} + \left(\frac{2}{m}\right) m
  \leq
  3 = O(1).
\end{eqnarray*}
By Markov's inequality, the probability is at most $\frac{1}{2}$ that it will take more than time 6 to reach from \closed{} to \filled{}.
Because of the memoryless property of the Markov process, conditioned on the fact that time $t$ has elapsed without reaching state \filled{}, the probability is at most $\frac{1}{2}$ that it will take more than $t+6$ time to reach state \filled{}.
Hence for any $t > 0$, the probability that it will take more than $6t$ time to reach from state \closed{} to \filled{} is at most $2^{-t}$.

Since this tail probability decreases exponentially, it follows that for $m/2$ teeth, the expected time for all of them to reach state \filled{} is~$O(\log m)$.\end{proof}

\section{Building lines via agitation}\label{sec:line}
In this section we prove our second main theorem, Theorem~\ref{thm:1D_line}.
We prove this by giving a line construction that works in merely $n \times 5 = O(n)$ space while achieving sublinear expected time $O(n^{1/3} \log n)$, and $O(1)$ monomer states.

\thmline*

\begin{figure}
  \centering
    \includegraphics[width=\textwidth]{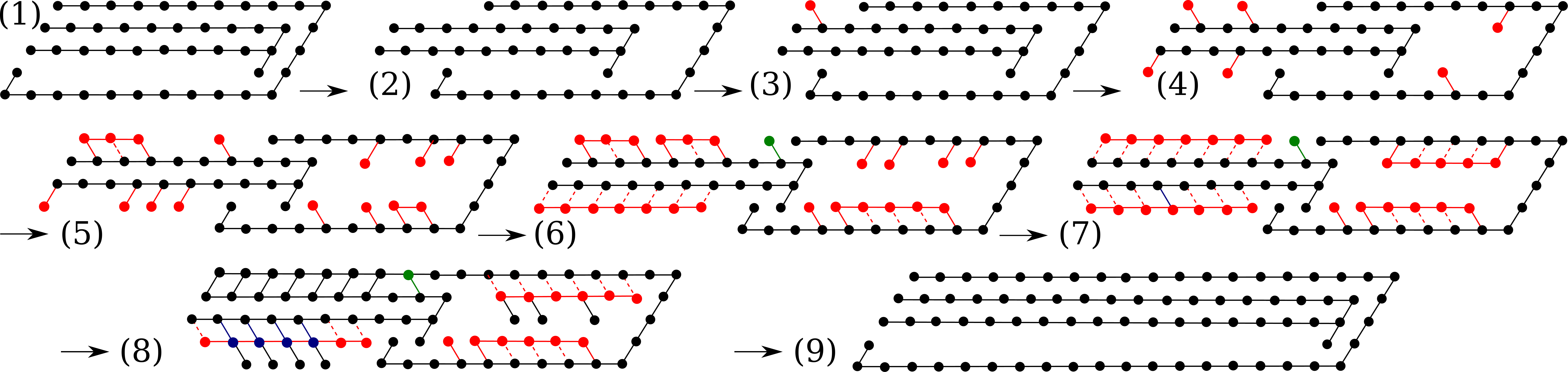}
\caption{Line doubling construction. The inner component is called the {\em sword}, which agitates left/right relative to the outer component called the scabbard (both are in black). The black sword-and-scabbard are doubled from length $m=8$ to length $2m=16$. Other monomers (red, green, blue) serve to both ratchet the movement, and to quickly in parallel build up the mass of the doubled sword-scabbard.}\label{fig:1D_line_doubling}
\end{figure}

\begin{proof}
\noindent {\bf Overview of construction.}
The binary expansion of $n \in \mathbb{N}$ is encoded as a horizontal line, denoted $x$, of $\ell = \lfloor \log_2 n \rfloor +1$  adjacent binary nubot monomers (each in state 0 or 1) with neighbouring monomers bound by rigid bonds, placed somewhere on the hexagonal grid.  First, the leftmost of these monomers triggers the growth of a constant sized (length 1) sword and scabbard structure. Then an iterated doubling process begins, that happens exactly $\ell $ times and will result in a sword-and-scabbard of length~$n$ (and height~$5$). At step $i$ of doubling, $1 \leq i \leq \ell$, the leftmost of the input monomers~$x_i$ (from~$x$) is ``read'', and then deleted. If $x_i = 0$ then there will be  a doubling of the length of the sword-and-scabbard, else if $x_i = 1$  there will be  a doubling of the length of the sword-and-scabbard with the addition of one extra monomer. It is straightforward to check that this doubling algorithm finishes with a length $n$ object after $\ell$ rounds.  After the final doubling step, a synchronization occurs, and then $\leq 4n$ of the monomers are deleted (in parallel) in such a way that an $n \times 1$ line remains. All that remains is to show the details of how each doubling step works.

\noindent {\bf Construction details.} Figure~\ref{fig:1D_line_doubling} describes the doubling process in detail: at iteration $i$ of doubling assume that (a) we read an input bit $0$, and that (b) we have a  sword-and-scabbard structure of length $m$ (and height 5). Since the input bit is $0$ we want to double the length to $2m$. As shown in Figure~\ref{fig:1D_line_doubling}(1), we begin with the sword sheathed in the scabbard. We next describe a biased (or ratcheted) random walk process that will ultimately result in the sword being withdrawn all the way to the {\em hook}, giving a structure of length~$2m$.
Via agitation, the sword may be unsheathed by moving out (to the left) of the scabbard, or by the scabbard  moving (to the right) from the sword, although, because of the hook the sword can never be completely withdrawn and hence the two components will never drift apart.\footnote{Besides preserving correctness of the construction, the hook is a safety feature, and hence the sword is merely decorative.}  The withdrawing of the sword is a random walk process with both the sword and scabbard agitating left-right. While this is happening, each monomer---at unit rate, conditioned on that monomer being unsheathed---on the top row of the sword tries to attach a new monomer above. Any such attachment event that succeeds acts as a {\em ratchet} that biases the random walk process in the forward direction. Also, as the sword is unsheathed each unsheathed sword monomer at the bottom of the sword attaches---at unit rate, conditioned on that monomer being unsheathed---a monomer below, and each monomer on the top  (respectively, bottom) horizontal row of the scabbard tries to attach a monomer below (respectively, above) it. These monomers can also serve as ratchets (although in our time analysis below we ignore them which serves only to slow down the analysis). Eventually the sword is completely withdrawn to the hook, and ratcheted at that position, so further agitations do not change the structure.

At this point we are done with the doubling step, and the sword and scabbard reconfigure themselves to prepare for the next doubling (or deletion of monomers if we are done). Figure~\ref{fig:1D_line_doubling}(6)--(9) gives the details. The attachment of new monomers results in 4 new horizontal line segments, each of length $m-1$. Each segment is built in the same way as used for  the synchronization technique shown in Section~\ref{sec:sync}, Figure~\ref{fig:sync}; specifically the bonds are initially formed as rigid, and then transition to flexible in such a way that the line segment (or ``synchronization'' row) is free to agitate relative to its ``backbone'' row only when exactly all $m$ bonds have formed. The line agitates left and right and is then synchronized (or locked into place, see Figure~\ref{fig:sync}) causing all $m$ monomers on the line to change state to ``done''. When the two new line segments that attached to the bottom and top of the sword are both done their rightmost monomers each bind to the scabbard with a rigid bond (as shown in Figure~\ref{fig:1D_line_doubling}(8)) and delete their bonds to the sword (Figure~\ref{fig:1D_line_doubling}(9)) (note that the rightmost of the latter kind of bonds is not deleted until after binding to the scabbard which ensures the entire structure remains connected at all times; also before the leftmost bond on the bottom is deleted a new hook is formed which prevents the new sword leaving the new scabbard prematurely). In a similar process,  the two new line segments that are attached to the scabbard form a new hook, bind themselves to the sword, and then release themselves from the scabbard. We are new ready for the next stage of doubling.

The previous description assumed that the input bit is $0$. If the input bit is instead  $1$ then after doubling both the sword and scabbard are increased in length by 1 monomer (immediately before forming the hook on the new scabbard).

After the final doubling stage then $O(n)$ monomers need to be deleted to leave an $n\times 1$ line of rigidly bound monomers (the goal is to build a line) without having monomers drift away (so as not to violate the space bound).  This is relatively straightforward to achieve: After the final doubling step, a synchronization occurs along the sword, and another along the inside of the scabbard. Then these synchronisation rows signal that the bond structure of all monomers should change to make fully connected $n \times 5$ rectangle, which then changes to become a ``comb'' with a horizontal rigidly connected length $n$ line on top, and 4 ``tooth'' monomers---with no horizontal bonds--- hanging vertically from each top monomer. Using monomer deletion rules, each tooth can then delete itself from bottom to top in 4 steps.  This comb is composed of the topmost row   rigidly binds to the (inside of the scabbard) below and the sword above, and then $\leq 4n $ of the monomers are deleted (in parallel) in such a way that an $n \times 1$ line remains.

\noindent {\bf Expected time analysis.}
Lemma~\ref{lem:1D_line_doubling_time}  states the expected time for a single doubling event: a length~$m$ sword is fully withdrawn to the hook, and locked into place, from a length $m$ scabbard in expected time $O(m^{1/3} \ln m)$.

Between each doubling event there is a reconfiguration of the sword and scabbard. Each reconfiguration invokes a constant number of synchronizations which, via Lemma~\ref{lem:sync_agitate}, take expected time $O(\log m)$ each. Changing of the bond structure also takes place in $O(\log m)$ expected time since each of the four new line segments change their bonds independently, and within a line segment all bond changes (expect for a constant number) occur independently and in parallel.

There are $\ell = \lfloor \log_2 n \rfloor +1$ doubling plus reconfiguration events. By Lemma~\ref{lem:1D_line_doubling_time}, and noting that the length of the sword and scabbard structure during the $k$'th doubling event is $m = \Theta(2^k)$, each doubling event takes time at most $c (2^k)^{1/3} \ln 2^k$ on the $k$'th event for some constant $c$. Then the total expected time is upper bounded by the geometric series
\begin{eqnarray*}
  \sum_{k=0}^{\ell-1} c(2^k)^{1/3} \ln 2^k
  &<&
  c \,\ell \sum_{k=0}^{\ell-1} (2^{1/3})^k
  \\&=& c \,\ell \frac{1 - (2^{1/3})^\ell}{1 - 2^{1/3}}
  \\&=& c \,\ell \cdot O((2^{1/3})^\ell)
  \\&=& O(n^{1/3} \log n).
\end{eqnarray*}
\end{proof}

\subsection{Line length-doubling analysis}
The following lemma is used in the proof of Theorem~\ref{thm:1D_line} and states that, starting from length~$m$, one ``length-doubling'' stage of the line construction completes in expected time $O(m^{1/3} \ln m)$. Intuitively, the proof  shows that the rapid agitation process is a random walk that quickly exposes a large portion of the sword, to which a monomer quickly attaches. This attachment irreversibly  ``ratchets''  the random walk forward, preventing it from walking backwards beyond the attachment position. Eventually the process finishes with the sword completely withdrawn and locked into the withdrawn position. 

\begin{lemma}\label{lem:1D_line_doubling_time}
The expected time for one line-doubling stage (doubling the length) of a length~$m$ sword and scabbard is $O(m^{1/3} \ln m)$.
\end{lemma}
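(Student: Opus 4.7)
The plan is to model the sword-scabbard dynamics as a reflected random walk $X_t - R_t$ (the exposed sword length relative to the current ratchet position $R_t$) coupled with a state-dependent Poisson attachment process. Each of the $\Theta(m)$ monomers in the sword and scabbard agitates in each horizontal direction at unit rate, and each such agitation changes the relative displacement $X_t$ by $\pm 1$ (unless blocked by the hook or by a ratchet), so $X_t - R_t$ behaves like a continuous-time symmetric random walk of rate $\Theta(m)$ reflected from below at~$0$. When $X_t - R_t = k$, each of the $k$ exposed (unratched) top-of-sword monomers independently attaches a new ratchet monomer above at rate~$1$, so attachments fire at total rate $\Theta(k)$, and whenever one fires $R_t$ jumps up to the current $X_t$.

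For a single epoch starting from ratchet position $R$, standard reflected-walk estimates give $\Exp[X_s - R] = \Theta(\sqrt{m s})$ with Gaussian-type concentration, so the expected cumulative attachment hazard up to time $t$ is $\int_0^t \Theta(\sqrt{m s})\, ds = \Theta(\sqrt{m}\, t^{3/2})$. Setting this equal to~$1$ yields $t = \Theta(m^{-1/3})$: an attachment fires in expected time $O(m^{-1/3})$. Moreover, by a size-biased sampling argument (attachments are strictly more likely when the walk is high above $R$), conditional on the first attachment of the epoch the walk is at position $\Omega(m^{1/3})$ above $R$ with constant probability, so with constant probability each epoch advances the ratchet by at least $\Omega(m^{1/3})$. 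Iterating through at most $O(m / m^{1/3}) = O(m^{2/3})$ epochs to bring $R_t$ from $0$ up to $m$ gives an expected total time of $O(m^{2/3})\cdot O(m^{-1/3}) = O(m^{1/3})$. The extra $\log m$ factor in the lemma is absorbed by a Chernoff/geometric tail argument: each epoch finishes in time $O(m^{-1/3} \log m)$ with probability $1 - m^{-\Omega(1)}$, and a union bound over the $O(m^{2/3})$ epochs turns the in-expectation bound into $O(m^{1/3} \log m)$.

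The main obstacle is the feedback between the reflected walk and the state-dependent attachment process: the reflecting barrier $R_t$ itself jumps every time an attachment fires, and the attachment rate depends on the current height of the walk above the barrier, so one cannot directly invoke off-the-shelf results on either reflected random walks or homogeneous Poisson processes. I would handle this by a strong-Markov/stopping-time decomposition of the process into epochs each terminating at the first attachment event (so within one epoch the barrier is fixed), combined with a careful verification, from the nubot agitation semantics, that the walk rate is indeed $\Theta(m)$ and that the per-exposed-monomer attachment rate is $\Theta(1)$ -- in particular accounting for the hook geometry and for exactly which monomer--direction pairs induce relative motion between the sword and the scabbard as opposed to moving the combined rigid structure as a whole.
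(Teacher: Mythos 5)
Your high-level plan---decompose the process into epochs ending at the first ratchet attachment, argue each epoch takes time $\Theta(m^{-1/3})$ and advances the ratchet by $\Theta(m^{1/3})$, and multiply---is the same scaling picture that drives the paper's proof. But the two quantitative claims at the heart of your argument are exactly the crux, and as stated neither is established. First, the ``cumulative hazard'' computation $\int_0^t \Theta(\sqrt{ms})\,ds = \Theta(\sqrt{m}\,t^{3/2}) = 1 \Rightarrow t = \Theta(m^{-1/3})$ bounds only the \emph{expected} integrated attachment rate; the survival probability is $\Exp\bigl[\exp\bigl(-\int_0^t (X_s - R)\,ds\bigr)\bigr]$, and Jensen's inequality gives $\Exp[e^{-\Lambda}] \geq e^{-\Exp[\Lambda]}$, which is the wrong direction. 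To conclude that an attachment fires by time $O(m^{-1/3})$ with constant probability you need a \emph{lower-tail} bound on the integrated reflected walk (i.e., that the walk is unlikely to hug the barrier for the whole interval), which is a real estimate that must be proved. Second, the ``size-biased sampling argument'' that the first attachment lands $\Omega(m^{1/3})$ above the ratchet with constant probability is asserted, not proved; note that the attachment position is uniform over the currently exposed monomers and the walk may well be low when the exponential clock happens to ring, so you need to control the joint distribution of (attachment time, walk height at that time), which is the genuinely delicate feedback you yourself flag in your last paragraph but do not resolve.

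The paper avoids both difficulties by a cruder but fully checkable device: it fixes the target advance $k = m^{1/3}\ln m$ \emph{in advance}, considers the three length-$k$ windows to the right of the ratchet, and defines a pessimistic bounding process in which only attachments over the \emph{middle} window count (so any attachment that counts is automatically $\geq k$ to the right of the ratchet---this replaces your size-biasing claim), attachments are disallowed until all $3k$ positions are exposed, and any partial re-sheathing resets the epoch entirely. The extra $\ln m$ in $k$ is then spent on an explicit Chernoff-type tail bound (Lemma~\ref{lem-random-walk-tail}) showing that the walk re-sheaths the middle window too quickly with probability at most $\frac14$, so each epoch succeeds with constant probability and the epoch count is geometric. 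Your last step is also muddled: a union bound over $O(m^{2/3})$ epochs yields a high-probability bound, not an expectation bound, and if your per-epoch claims held you would in fact be proving the stronger bound $O(m^{1/3})$ rather than ``absorbing'' a $\log$ factor. In short, the skeleton is right, but the proposal leaves unproved precisely the two estimates the paper's three-interval construction and random-walk tail lemma exist to supply.
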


\begin{proof}
  Each stage of the  line construction starts with the sword completely inside the scabbard.
  Any monomer of the sword outside of the scabbard creates a ``blocking monomer'' to its north (top) at constant rate $1$, with a rigid bond, so that no part of the sword to the left of any blocking monomer (in particular, the rightmost blocking monomer) can re-enter the scabbard.
  The stage is completed when the sword is completely out of the scabbard (to the hook) and the rightmost monomer in the sword creates a blocking monomer above it.
  The sword and the scabbard are both undergoing agitation, but for simplicity we may imagine the sword fixed at the origin, and the scabbard agitating relative to the sword at rate $\geq 2m$. %
  We also imagine that the horizontal grid positions on the sword are labeled from left to right by the integers $1,\ldots ,m$ in that order.  In the absence of any blocking monomers, the scabbard undergoes an unbiased random walk on the sequence $1,\ldots ,m$, where the current integer is the position of the left end of the scabbard on the sword.
  A blocking monomer at position $i$ on the sword introduces a reflecting barrier in this walk that from that point on confines the walk to the sequence $i,i+1,\ldots,m$.

  At any time in the process, define the ``ratchet'' to be the rightmost blocking monomer on the sword.
  Let $p \in \{1,\ldots,m\}$ be the position of the ratchet on the sword at some time.
  Let $k = m^{1/3} \ln m$.
  We will show that the expected time for the position of the ratchet to move to the right to relative position at least $p + k$ (i.e., to move right by at least distance~$k$) is $O(\frac{\ln^2 m}{m^{1/3}})$.

  Since this motion of distance $m^{1/3} \ln m$ must happen $m^{2/3} / \ln m$ times for the position of the ratchet to move by $m$ (after which the process is complete), by linearity of expectation, the entire process completes in time
  $\frac{m^{2/3}}{\ln m} \cdot O(\frac{\ln^2 m}{m^{1/3}}) = O(m^{1/3} \ln m)$.

  We first consider the expected time for the $3k$ positions of the sword immediately to the right of the ratchet to become unsheathed.
  We will focus on the three length-$k$ intervals to the right of the ratchet, referring to them as the left, middle, and right intervals.
  In the worst case, they all start out sheathed, and in this case, the expected number of steps for the random walk to move the scabbard right by $3k$ distance is at most $2(3k)^2$.
  Since the agitation rate of the scabbard is $\geq 2m$, this corresponds to expected time at most $\frac{(3k)^2}{m}$.

  We want to upper bound the time at which a new ratchet attaches at least $k$ positions to the right of the current ratchet.
  Any monomer attachment in the middle length-$k$ interval achieves this, so we focus on this event.
  Let $T_a$ be the random variable representing the time for an attachment to occur above the middle interval.
  Our goal is to show
  $E[T_a] = O(\frac{\ln^2 m}{m^{1/3}}).$

  We can bound this expected time by the expected time for a slower ``bounding'' process, in which no attachments are allowed until all three intervals are unsheathed, and in which only attachments on the top of the middle (length $k$) interval are allowed.
  During the time that the entire middle interval is unsheathed, the rate of attachments to the top of this length-$k$ interval is $k$. Hence the expected time for an attachment, conditioned on the entire middle interval being unsheathed, is~$\frac{1}{k}.$
  The middle interval remains completely unsheathed so long as the agitation has not re-sheathed the entirety of the rightmost interval.
  This re-sheathing, conditioned on sword being already unsheathed to exactly position $3k$, takes expected time~$\leq \frac{k^2}{m}$.

  If part of the middle interval becomes re-sheathed, then in our bounding process, (1)~attachments are disallowed until again the entire rightmost interval is unsheathed, and (2)~the entire sword is instantaneously  sheathed (the scabbard is immediately moved as far left as it can go, lining up the ratchet against the scabbard). (Note that the time to unsheath all the intervals from any position is upper bounded by the worst case of unsheathing the entire sword.)

  We can therefore break this bounding process up into epochs, summarized as follows.
  In each epoch, the sword starts with all three intervals completely sheathed.
  The system then undergoes random agitation until the point immediately after unsheathing all three intervals.
  Then attachments in the middle interval are allowed, until agitation re-sheaths the rightmost monomer of the middle interval, at which point all three intervals are immediately re-sheathed, and the next epoch begins.
  The process halts upon the first attachment in the middle interval.
  We write $E_i$ for the event that the attachment happens during the $i$'th epoch.

  We derive upper bounds on $\Exp[T_a | E_i]$.
  Note that for event $E_i$ to occur, we have precisely~$i$ random walks of length~$3k$ (to expose all three intervals $i$ times) and $i-1$ random walks of length~$k$ (to re-sheath the entire right interval $i-1$ times, interleaved between the random walks of length~$3k$), and one attachment event to a length $k$ interval.
  Therefore 
\begin{equation}\label{eq:exptimeattachepoch_i}
  \Exp[T_a | E_i] \leq  i \cdot \frac{(3k)^2}{m} + (i-1) \cdot \frac{k^2}{m} + \frac{1}{k} \leq \frac{1}{k} + \frac{10 i k^2}{m} 
\end{equation}

  Next we lower bound $\Pr[E_1]$, i.e., the probability that an attachment occurs before the middle interval can be re-sheathed.
  Let $c = \ln^2 m$.
  Let $T_k$ be the random variable representing the time for a continuous time unbiased random walk with rate $m$ to move distance $k$.
  By Lemma~\ref{lem-random-walk-tail}, $\Pr[T_k \leq \frac{k^2}{2 m c}] \leq \frac{k^2}{c} e^{-c / 2} + 0.86^{k^2/c}.$

  The time for the middle interval to attach a blocking monomer is an exponential random variable $T'_a$ with rate $k$, given that the middle interval remains unsheathed.
  Hence, $\Pr[T'_a \geq \frac{k^2}{2 c m}] = e^{-k \frac{k^2}{2 c m}} = e^{-\frac{k^3}{2 c m}}$.

  If $T'_a < \frac{k^2}{2 c m}$ and $T_k > \frac{k^2}{2 c m}$, then $E_1$ occurs, so by the union bound and the above two probability bounds, $\Pr[\neg E_1] \leq e^{-\frac{k^3}{2 c m}} + \frac{k^2}{c} e^{-c / 2} + 0.86^{k^2/c}$.
  Recall that $c=\ln^2 m$ and $k=m^{1/3} \ln m$, which implies that
  \begin{eqnarray*}
  \Pr[\neg E_1]
  &\leq&
  e^{-\frac{(m^{1/3} \ln m)^3}{2 m \ln^2 m}} + \frac{(m^{1/3} \ln m)^2}{\ln^2 m} e^{- \frac{1}{2} \ln^2 m} + 0.86^{(m^{1/3} \ln m)^2 / \ln^2 m}
  \\&=&
  e^{-\frac{1}{2} \ln m} + m^{2/3} e^{- \frac{1}{2} \ln^2 m} + 0.86^{m^{2/3}}
  \\&=&
  e^{-\frac{1}{2} \ln m} + m^{2/3} (e^{\ln m})^{- \frac{1}{2} \ln m} + 0.86^{m^{2/3}}
  \\&=&
  e^{-\frac{1}{2} \ln m} + m^{2/3 - \frac{1}{2} \ln m} + 0.86^{m^{2/3}}
  \end{eqnarray*}
  Since all three terms go to 0 as $m \to\infty$, for sufficiently large $m$ the right side is at most $\frac{1}{4}$.

We use similar reasoning to derive bounds on $\Pr[E_i]$ for $i > 1$:  for $E_i$ to occur, we must have either $T'_a \geq \frac{k^2}{c m}$ or $T_k \leq \frac{k^2}{c}$ occur $i-1$ times in a row independently (i.e., we must have the events ($\neg E_1$) AND ($\neg E_2$ conditioned on $\neg E_1$) AND $\ldots$ AND ($\neg E_{i-1}$ conditioned on $\neg E_{i-2}$ AND  $\neg E_{i-3}\ldots$) in order for the first $i-1$ re-sheathings to occur before an attachment can occur in the middle interval).
  Since these are independent
\begin{equation}\label{eq:probattachepoch_i}
 \Pr[E_i] \leq \frac{1}{4^{i-1}} = \frac{1}{2^i}
\end{equation}

We combine the bound on $\Exp[T_a | E_i]$ (Equation~\eqref{eq:exptimeattachepoch_i}) with the bound on $\Pr[E_i]$ (Equation~\eqref{eq:probattachepoch_i}), to get
  \begin{eqnarray*}
    \Exp[T_a]
    &=&
    \sum_{i=1}^\infty \Exp[T_a | E_i] \cdot \Pr[E_i]
    \\&\leq&
    \sum_{i=1}^\infty \left( \frac{1}{k} + \frac{10 i k^2}{m} \right) \cdot \Pr[E_i]
    \\&=&
    \frac{1}{k} + \frac{10 k^2}{m} \sum_{i=1}^\infty i \cdot \Pr[E_i]
    \\&\leq&
    \frac{1}{k} + \frac{10 k^2}{m} \sum_{i=1}^\infty i \cdot \frac{1}{2^i}
    \\&=&
    \frac{1}{k} + \frac{20 k^2}{m} \ \ \ \ \text{since the sum converges to 2}
    \\&=&
    \frac{1}{m^{1/3} \ln m} + \frac{20 (m^{1/3} \ln m)^2}{m}
    \\&=&
    \frac{1}{m^{1/3} \ln m} + 20 \frac{\ln^2 m}{m^{1/3}}
    \\&<&
    21 \frac{\ln^2 m}{m^{1/3}}.
  \end{eqnarray*}

  This is a bound on the expected time for the ratchet to move right by distance $k = m^{1/3} \ln m$.
  Since this must occur $\frac{m}{m^{1/3} \ln m} = \frac{m^{2/3}}{\ln m}$ times for the ratchet to move the complete distance $m$, by linearity of expectation the expected time to move distance $m$ is at most $21 \frac{\ln^2 m}{m^{1/3}}  \frac{m^{2/3}}{\ln m} = 21 m^{1/3} \ln m$.
\end{proof}

The following  technical lemma bounds the probability that a continuous-time random walk takes much longer than its expected time to reach a certain distance from the starting point.
\begin{lemma}\label{lem-random-walk-tail}
  Let $k\in\Z^+$, let $c \in \R^+$, let $m > 0$ and let $T_k$ be the random variable describing the amount of time taken by a continuous-time random walk on $\Z$ with rate $m$ to reach the value~$k$ for the first time, starting from 0.
  Then $\Pr[T_k \leq \frac{k^2}{2 m c}] \leq \frac{k^2}{c} e^{-c / 2} + 0.86^{k^2/c}.$
\end{lemma}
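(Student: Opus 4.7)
The plan is to decompose the event $\{T_k \leq t\}$ with $t = \tfrac{k^2}{2mc}$ according to the number of Poisson jumps made in $[0,t]$. Write $X_s = \sum_{i=1}^{N(s)} Z_i$, where $N(s)$ is a Poisson process of rate $m$ and the $Z_i \in \{-1,+1\}$ are i.i.d.\ uniform, and set $S_j = \sum_{i=1}^j Z_i$ for the embedded discrete walk. Let $\mu := mt = k^2/(2c)$ and choose the cutoff $M = \lceil k^2/c \rceil = \lceil 2\mu \rceil$. Then the event $T_k \leq t$ implies either $N(t) \geq M$ or $\max_{0 \leq j \leq M} S_j \geq k$ (because if $T_k \leq t$ and $N(t) < M$, the discrete walk crosses $k$ within its first $M$ steps), so by the union bound
\begin{equation*}
  \Pr[T_k \leq t] \;\leq\; \Pr[N(t) \geq M] \;+\; \Pr\!\left[\max_{0 \leq j \leq M} S_j \geq k\right].
\end{equation*}

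For the first term I would apply the standard Chernoff bound for Poisson, $\Pr[\mathrm{Pois}(\mu) \geq 2\mu] \leq (e/4)^{\mu}$, obtaining $\Pr[N(t) \geq M] \leq (e/4)^{k^2/(2c)} = \bigl((e/4)^{1/2}\bigr)^{k^2/c}$; since $(e/4)^{1/2} \approx 0.8244 < 0.86$, this matches the second summand of the claim.

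For the second term I would use a union bound over step counts together with Hoeffding's inequality for sums of i.i.d.\ $\pm 1$ variables. Since $\Pr[S_j \geq k] = 0$ for $j < k$ and $\Pr[S_j \geq k] \leq e^{-k^2/(2j)}$ for $j \geq k$, and the latter is increasing in $j$,
\begin{equation*}
  \Pr\!\left[\max_{0 \leq j \leq M} S_j \geq k\right] \;\leq\; \sum_{j=k}^{M} \Pr[S_j \geq k] \;\leq\; M \cdot e^{-k^2/(2M)} \;\leq\; \tfrac{k^2}{c} \cdot e^{-c/2},
\end{equation*}
which matches the first summand of the claim.

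The only real choice is the cutoff $M = k^2/c$, which balances the two tails so that Hoeffding's exponent $k^2/(2M)$ equals the target $c/2$; the rest is routine Chernoff/Hoeffding bookkeeping, and one could replace the crude union bound on $\max_j S_j$ by the reflection principle to sharpen the constant if desired. I do not foresee any substantive obstacles.
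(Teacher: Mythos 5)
Your proposal is correct and follows essentially the same route as the paper: both decompose the event into ``the embedded $\pm1$ walk reaches $k$ within about $k^2/c$ steps'' (bounded by a union bound plus Hoeffding, giving $\frac{k^2}{c}e^{-c/2}$) and ``the clock runs too fast'' (a Chernoff bound giving the $0.86^{k^2/c}$ term). The only difference is cosmetic: you bound $\Pr[N(t)\geq M]$ via the Poisson Chernoff bound, while the paper bounds the equivalent event $\Pr[\sum_{i=1}^{M}X_i\leq t]$ via the exponential moment-generating function with $\theta=-m/2$; these are dual views of the same event and both constants ($\approx 0.824$ vs.\ $\approx 0.856$) fall below $0.86$.
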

\begin{proof}
  For all $i \in \Z^+$, define $B_i$ to be a random variable  with $\Pr[B_i = +1] = \Pr[B_i = -1] = \frac{1}{2}$.
  For all $j \in \Z^+$, define $S_j = \sum_{i=1}^j B_i$.
  Let $t_k$ be the number of steps needed for the random walk to reach $k$.
  For any $\tau \in \Z^+$, the condition that $t_k \leq \tau$ is equivalent to the condition that $(\exists j \in \range{\tau})\ S_j \geq k$.

  Then by the union bound, $\Pr[t_k \leq \frac{k^2}{c}] \leq \sum_{j=1}^{k^2 / c} \Pr[S_j \geq k]$.
  We use the following form of the Chernoff bound: If $B_i$ is a random variable with $\Pr[B_i = +1] = \Pr[B_i = -1] = \frac{1}{2}$, then for any $j,t$, $\Pr[\sum_{i=1}^j B_i \geq t \cdot \sqrt{j}] \leq e^{-t^2 / 2}$.
  By this bound, $\Pr[S_j \geq k] = \Pr[S_j \geq \frac{k}{\sqrt{j}} \sqrt{j}] \leq e^{-k^2/(2 j)}$.
  Therefore
  \begin{equation}\label{ineq-1}
    \Pr\left[t_k \leq \frac{k^2}{c}\right]
    \leq \sum_{j=1}^{k^2 / c} e^{-k^2/(2 j)}
    < \sum_{j=1}^{k^2 / c} e^{-k^2/(2 k^2 / c)}
    = \frac{k^2}{c} e^{-c / 2}.
  \end{equation}

  We now bound $\Pr[T_k \leq \frac{k^2}{2 m c} | t_k \geq \frac{k^2}{c}]$.
  In the worst case, $t_k = \frac{k^2}{c}$, so we make this assumption.
  Under this assumption, the event $T_k \leq \frac{k^2}{2 m c}$ is equivalent to the event that a sum of $\frac{k^2}{c}$ exponential random variables, each with rate $m$, takes value at most $\frac{k^2}{2mc}$.

  Recall that the moment-generating function of an exponential random variable $X_i$ with rate $m$ is $M_{X_i}(\theta) = \Exp[e^{\theta X_i}] = \frac{m}{m-\theta}$, defined whenever $|\theta| < m$.
  Then if $X = \sum_{i=1}^{k^2/c} X_i$ is a sum of independent exponential random variables, each with rate $m$, we have
  $$
    \Exp[e^{\theta X}]
    = \Exp[e^{\theta \sum_{i=1}^{k^2/c} X_i}]
    = \Exp\left[\prod_{i=1}^{k^2/c} e^{\theta X_i} \right]
    = \prod_{i=1}^{k^2/c} \Exp[e^{\theta X_i}]
    = \left( \frac{m}{m-\theta} \right)^{k^2/c}.
  $$
  Therefore, if $\theta < 0$, by Markov's inequality,
  $$
    \Pr\left[X \leq \frac{k^2}{2mc}\right]
    = \Pr[e^{\theta X} \geq e^{\theta k^2 / (2mc)}]
    \leq \frac{\Exp[e^{\theta X}]}{e^{\theta k^2 / (2mc)}}
    = \frac{\left( \frac{m}{m-\theta} \right)^{k^2/c}}{e^{\theta k^2 / (2mc)}}.
  $$
  Letting $\theta = -m/2$, the above expression becomes
  \begin{equation}\label{ineq-2}
    \frac{\left( \frac{2}{3} \right)^{k^2/c}}{e^{- k^2 /(4c)}}
    = \left( \frac{2 e^{1/4}}{3} \right)^{k^2/c}
    < 0.86^{k^2/c}.
  \end{equation}

  Combining the bounds~\eqref{ineq-1} and~\eqref{ineq-2} with the union bound, we have
  $\Pr[T_k \leq \frac{k^2}{2 m c}] \leq \frac{k^2}{c} e^{-c / 2} + 0.86^{k^2/c}.$
\end{proof}

\section*{Acknowledgments}
A special thanks to Erik Winfree for many insightful and helpful discussions on the model and constructions. We also thank Robert Schweller, Matthew Cook and Andrew Winslow for discussions on the model and problems studied in this paper.

\bibliographystyle{abbrv} %
\bibliography{nubots}

 \newpage
 \appendix
 \section*{Appendix}

\section{Movement rule definition}\label{sec:movement_definition}

The nubot movement rule is not used in the constructions in this paper, but we include its definition here since it is part of the full nubot model~\cite{nubots,nubotsDNA19}.

From Section~\ref{sec:defs}, a rule is of the form $(s1, s2, b, \overrightarrow{u}) \rightarrow (s1', s2', b', \overrightarrow{u}')$. For a \emph{movement} rule, $\overrightarrow{u} \neq \overrightarrow{u}'$. Also, it must be the case that $d(\overrightarrow{u},\overrightarrow{u}') = 1$, where $d(u,v)$ is Manhattan distance on the triangular grid, and  $s1,s2,s1',s2' \in S \setminus \{ \mathsf{empty} \}$. If we fix $\overrightarrow{u} \in \mathcal{D}$, then there are two $\overrightarrow{u}' \in \mathcal{D}$ that satisfy $d(\overrightarrow{u},\overrightarrow{u}') = 1$.  A movement rule is applicable if it can be applied both (i) locally and (ii) globally, as follows. 

(i) Locally, the pair of monomers should be in state $s1, s2$, share bond $b$ and have orientation~$\overrightarrow{u}$ of $s2$ relative to $s1$. Then,  one of the two monomers is  chosen nondeterministically to be the {\em base} (that remains stationary), the other is the {\em arm} (that moves).  If the~$s2$ monomer, denoted $X$, is chosen as the arm then $X$ moves from its current position~$p(X)$ to a new position $p(X) - \overrightarrow{u} + \overrightarrow{u}'$. After this movement (and potential state change), 
$\overrightarrow{u}'$ is the relative position of the~$s2'$ monomer to the~$s1'$ monomer, as illustrated in Figure~\ref{fig:model}(b). Analogously, if the $s1$ monomer, $Y$, is chosen as the arm then~$Y$ moves from  $p(Y)$ to  $p(Y) + \overrightarrow{u} - \overrightarrow{u}'$. Again, $\overrightarrow{u}'$ is the relative position of the~$s2'$ monomer to the~$s1'$ monomer.  
Bonds and states may change during the movement. 

(ii) Globally, the movement rule may push, or pull other monomers, or if it can do neither then it is not applicable. This is formalized as follows, see~\cite{nubots} or~\cite{nubotsDNA19} for examples.
Using the definition of agitation set, Definition~\ref{def:agitationset}, we define the {\em movable set} $\mathcal{M}(\config, A, B, \vec{v})$ for a pair of monomers $A, B$, unit vector $\vec{v}$ and configuration~$C$.

\begin{definition}[Movable set]\label{def:move}
Let $\config$ be a configuration containing adjacent monomers $A,B$,  let $\vec{v} \in \mathcal{D}$ be a unit vector, and let~$\config'$ be the same configuration as~$\config$ except that~$\config'$ omits  any bond between $A$ and $B$. The \emph{movable set} $\mathcal{M}(\config, A,B,\vec{v})$ is defined to be the agitation set $\mathcal{A}(\config', A, \vec{v})$ if $B \not\in \mathcal{A}(\config', A, \vec{v})$, and  the empty set otherwise. 
\end{definition}

If $\mathcal{M}(\config, A, B, \overrightarrow{v}) \neq \{ \}$, then the movement where $A$ is the arm (which should be translated  by~$\overrightarrow{v}$) and $B$ is the base (which should not be translated) is applied as follows: (1) the movable set $\mathcal{M}(\config, A, B, \overrightarrow{v})$ moves unit distance along $\overrightarrow{v}$;  (2) the states of, and the bond between, $A$ and $B$ are updated according to the rule;  (3) the states of all the  monomers besides $A$ and $B$ remain unchanged and pairwise bonds remain intact (although monomer positions and flexible/null bond orientations may change). 
If $\mathcal{M}(\config, A, B, \overrightarrow{v}) = \{ \}$, the movement rule is inapplicable  (the rule is ``blocked'' and thus $A$ is prevented from translating).

\end{document}